\documentclass[fleqn,letterpaper,UKenglish,11pt]{article}
\usepackage[margin=1in]{geometry}

\usepackage[l2tabu, orthodox]{nag}
\usepackage[utf8]{inputenc}
\usepackage[pdfusetitle]{hyperref}
\usepackage{microtype}
\usepackage{tcolorbox}

\usepackage{xcolor}

\usepackage{mathtools}

\usepackage{amsmath,amssymb,amsthm}

\theoremstyle{plain}

\newtheorem{theorem}{Theorem}[section]

\newtheorem*{conjecture*}{Conjecture}

\newtheorem{lemma}[theorem]{Lemma}

\newtheorem{corollary}[theorem]{Corollary}
\theoremstyle{definition}
\newtheorem{definition}[theorem]{Definition}

\newcommand{\Fix}[2]{\operatorname{Fix}(#1,#2)}
\newcommand{\lQ}{(l,Q)}

\usepackage{enumerate} %Gives ability, e.g., \begin{enumerate}[Case
\usepackage{authblk}

\usepackage{xspace}

\usepackage{todonotes}

\usepackage{booktabs}

\newcommand{\Sig}{\operatorname{Sig}}
\newcommand{\Tup}{\operatorname{Tup}}
\newcommand{\Cl}{\operatorname{Cl}}
\newcommand{\pr}{\operatorname{pr}}
\newcommand{\Reach}{\textnormal{\textsc{Reach}}}
\newcommand{\NextOne}{\textnormal{\textsc{Next-one}}}
\newcommand{\Next}{\textnormal{\textsc{Next}}}
\newcommand{\Solve}{\textnormal{\textsc{Solve}}}

\newcommand{\size}[1]{\lVert #1\rVert}
\DeclareMathOperator{\CSP}{CSP}
\DeclareMathOperator{\Inv}{Inv}
\newcommand{\Inst}{\mathcal{I}}

\newenvironment{procedure}[2]{\par\medskip\noindent\textbf{Procedure} \textsc{#1}$(#2)$\par\nopagebreak\noindent}{\par\medskip}
\newcounter{stepi}
\newcounter{stepii}
\newenvironment{steps}{%
  \begin{list}{\textbf{Step \arabic{stepi}.}}{\usecounter{stepi}%
    \setlength{\labelwidth}{3.4em}\setlength{\labelsep}{0.5em}\setlength{\leftmargin}{3.9em}%
    \setlength{\itemsep}{1pt}\setlength{\parsep}{0pt}\setlength{\topsep}{2pt}}}{\end{list}}

\newcounter{romi}

\begin{document}
\title{A Fast Algorithm for Maltsev Constraints}
 \author[1]{Victor Lagerkvist}
 \affil[1]{Link\"oping University, Sweden}

\date{}
\maketitle

\begin{abstract}
The \emph{constraint satisfaction problem} over a set of relations $\Gamma$ ($\CSP(\Gamma)$) is the computational problem of deciding if a set of constraints admits at least one solution. The classical complexity for finite-domain $\CSP(\Gamma)$ is settled by the \emph{CSP dichotomy theorem}: it is tractable if $\Gamma$ satisfies a non-trivial algebraic invariant and is NP-complete otherwise. However, not all these algebraic invariants result in efficient algorithms despite being theoretically tractable. A notable case that generalizes linear equations is that of \emph{Maltsev} CSPs: an $n$-variable instance with $m$ constraints is solvable in roughly $O(n^8 \cdot m)$ time by Bulatov \& Dalmau (SIAM J.\ Comput.\ 2006) or $O(n^4 \cdot m)$ time by Dyer \& Richerby (SIAM J.\ Comput.\ 2013). At the same time, arguably, most ``natural'' and efficiently usable polynomial-time algorithms rarely exceed a quadratic or cubic time bound. In this paper we revisit Maltsev constraints with this question in mind and find a $O(n^2 \cdot m)$ algorithm (for finite languages, for infinite languages we in addition need to take the total size of the instance into account). The main novel idea is to not attempt to improve the bottleneck in Bulatov \& Dalmau (the \textsc{Fix-Values} procedure) but to avoid it altogether with a slightly more refined approach that allows us to search through a smaller space.
\end{abstract}

\section{Introduction}
The \emph{constraint satisfaction problem} over a set of relations $\Gamma$ ($\CSP(\Gamma)$) is the well-known computational problem of deciding if a set of constraints over $\Gamma$ admits at least one satisfying assignment. For finite domains the classical complexity is by now well understood: $\CSP(\Gamma)$ is tractable if $\Gamma$ satisfies a non-trivial algebraic condition~\cite{Zhuk:jacm2020,Bulatov:focs2017}, and is NP-hard otherwise, by a gadget reduction from 3-SAT~\cite{BJK05,BOP18}. There are, roughly speaking, two broad and natural classes of tractable CSPs: those solvable by local consistency methods, and those solvable by (algebraic generalizations of) Gaussian elimination~\cite{FV98}. While the algorithms of Bulatov~\cite{Bulatov:focs2017} and Zhuk~\cite{Zhuk:jacm2020} handle certain involved combinations of local consistency and Gaussian elimination, the algorithms are complicated and are mainly intended as a proof of concept. Whether there is a simpler, unified algorithm is a topic of intense research for both CSP and its promise version~\cite{BBKO21,BGWZ20,CZ23,DO24}.
 
The limits and applications of simple but powerful local consistency algorithms such as (singleton) arc consistency are well understood~\cite{BK14,FV98,Kozik21}. The same is true for the reach of the algebraic generalization of linear equations, in the sense that the CSPs it solves are exactly characterized~\cite{BIMMVW10,BD06,IMMVW10}. What is lacking on the latter side is efficiency: the known algorithms are polynomial but far from practical, and, unlike local consistency, are rarely used in practice.
Towards defining the algebraic generalization of linear equations, first note that a system of linear equations (mod $d$) in $n$ variables satisfies the following algebraic invariant: for any three solutions $s,t,u \in [d]^n$ the tuple $s - u + t$ (again, mod $d$) satisfies this system, too. This can be generalized to (coset) equations over a group $(D, \cdot)$ where the corresponding operation guaranteed to preserve the solution space is $x \cdot y^{-1} \cdot z$. If one abstracts this further then the relevant property is a ternary operation $\varphi \colon D^3 \to D$ such that $\varphi(x,x,y) = \varphi(y,x,x) = y$ for any $x,y \in D$. Such an operation is called a \emph{Maltsev} operation and if we require that each relation in $\Gamma$ is ``preserved'' by $\varphi$ (see Section~\ref{sec:prelim} for details) then we get a reasonable algebraic generalization of linear equations (\emph{Maltsev CSPs}).

The celebrated result by Bulatov \& Dalmau~\cite{BD06}, the \emph{simple algorithm for Maltsev constraints}, then solves every Maltsev CSP in polynomial time. In the sequel, let us abbreviate this algorithm by \textsc{BD}. It is conceptually simple and is built upon the idea that every relation preserved by a Maltsev operation has a small generating set (a \emph{compact representation}). We then go through the constraints $C_1, \ldots, C_m$ in the instance and for each $C_l$, $1 \leq l \leq m$, assess its impact upon the solution space by constructing a compact representation of the solution space of $(C_1, \ldots, C_l)$. We then either detect an inconsistency (no solution can be represented) or output a compact representation of the solution space of $C_1, \ldots, C_m$. The main complication is to compute these compact representations efficiently. To see exactly where the complexity bottleneck in \textsc{BD} lies we thus need to define a few technical notions.
For $R\subseteq D^{n}$, a pair of tuples $t,t'\in R$ \emph{witnesses} a triple $(i,a,b)$ if $t$ and $t'$ agree on the first $i-1$ coordinates and $t_{i}=a$, $t'_{i}=b$. The pair $(a,b)$ is sometimes called a \emph{fork} of $R$ at coordinate $i$~\cite{BMS19}.
The \emph{signature} $\Sig(R)$ is the set of witnessed triples, so $|\Sig(R)|\le n|D|^{2}$ however large $R$ is.
A \emph{representation} of $R$ is a subset $R'\subseteq R$ with $\Sig(R')=\Sig(R)$, and it is \emph{compact} if $|R'|\le 2|\Sig(R)|$. The key fact is that a representation of a relation preserved by $\varphi$ generates it under $\varphi$~\cite[Lemma~3.1]{BD06}.
\textsc{BD} then uses a procedure \textsc{Solve} that maintains a compact representation of the solution space $R_l$ of the first $l$ constraints $C_1, \ldots, C_l$.
To update the current compact representation \textsc{BD} uses the procedure \textsc{Next}, which adds the constraint scope one coordinate at a time by calling the procedure \textsc{Next-beta}.
\textsc{Next-beta} must find, for every $(i,a,b)$ in the signature of the new relation, two witnesses with a common prefix, and accomplishes this with two subroutines:
\textsc{Nonempty} finds a tuple of $R$ whose projection onto given coordinates lies in a given relation, by closing the representation under $\varphi$, and
\textsc{Fix-values} takes a representation of $R$ and values $a_{1},\dots,a_{i-1}$ and returns a compact representation of the tuples of $R$ whose first $i-1$ coordinates are $a_{1},\dots,a_{i-1}$. The second witness is found by \textsc{Nonempty} inside that representation.
Thus, \textsc{Fix-values} is called once for each of the $n|D|^{2}$ triples $(i,a,b)$ (once per $(i,a)$ if its result is saved), and each call fixes up to $n$ coordinates, calling \textsc{Nonempty} $n|D|^{2}$ times per coordinate.
Bulatov and Dalmau do not make the polynomial explicit but Dalmau's technical report~\cite{Dal04} bounds one coordinate step (with respect to a relation $S$) by $O(n^{8}+(n+|S|)^{4}|S|n^{2})$ and the algorithm by $O(mn^{8}+m(n+|S^{*}|)^{4}|S^{*}|n^{2})$ for a finite language with largest (in the sense of cardinality) relation $S^{*}$.
 
The \textsc{BD} algorithm and various questions concerning Maltsev CSPs (and the related \emph{subpower membership} problem) have been revisited several times in the literature in different contexts~\cite{Dal06,IMMVW10,BIMMVW10,BM25,Mayr12,BMS19,Kom24,Gay25,BD07,BDGM12,DH17,DS26,Car16,CL17}. 
Notably, Dyer and Richerby~\cite{DR13}, in the context of simplifying the dichotomy for finite-domain counting CSPs, report an improvement of \textsc{BD}.
They replace compact representations by (small) \emph{frames} that describe (at most $n(|D|-1)+1$ entries) tuples with a common prefix.
Similar to a compact representation a frame generates $R$ through applications of the Maltsev operation~\cite[Lemma~15]{DR13}.
Their update step avoids the worst redundancies of \textsc{Nonempty} (a closure procedure that considers each triple once, \cite[Lemma~13]{DR13}) and fixes a prefix in $O(n^{3})$ time \cite[Corollary~17]{DR13}. However, it still computes, for every prefix class at every coordinate, a frame of the tuples with that prefix, and then intersects with the constraint in question.
For a fixed language and constant $|D|$ this gives $O(n^{4})$ per constraint and $O(mn^{4})$ in total~\cite[Lemmas~19--20]{DR13}.
To the best of our knowledge this is still the best explicit bound for Maltsev CSPs. Furthermore, it runs into the exact same wall as \textsc{BD}:
for each of about $n|D|$ pairs (coordinate, value or class), one build a representation of the tuples of $R$ with the corresponding prefix, and search inside it.
A representation of a prefix class has up to $n(|D|-1)+1$ (for a frame) or $2n|D|^{2}$ (for a compact representation) tuples of length $n$, so merely writing it down can cost $\Theta(n^{2})$. The cheapest known way of computing it costs $O(n^{3})$~\cite{DR13,Kom24}, which over $n|D|$ classes gives $O(n^{4})$ per constraint.

\paragraph{Our paper.}
In this paper we break the $O(n^{4}\cdot m)$ barrier with an obvious (in hindsight) but novel approach.
Rather than attempting to speed up \textsc{Fix-values} we avoid calling it in the first place.
We first investigate what \textsc{Next-beta} actually extracts from the tuples of $R$ with a given prefix, i.e., which values occur at the one position $i$ under consideration, and which of these tuples are compatible with the new constraint. Both questions depend only on the coordinates in position $i$ and in the scope $Q$ of the constraint. So all that is needed is the \emph{projection} of the prefix class onto these $|Q|+1$ coordinates. This is a set of at most $|D|^{2}|S|$ tuples (thanks to the coordinate-at-a-time idea of \textsc{BD}).
 
The question is how to get at this projection without going through the entire class. This turn out to be straightforward.
The first property is a localised form of the generating lemma (i.e., that a relation can be generated from its compact representation with the Maltsev operation). Fix a tuple $u$ of $R$ with respect to some prefix, pick $(a,a')$ from the current compact representation (witnessing a fork at some position $\geq i$), and consider \emph{moves} of the form $x\mapsto\varphi(x,a,a')$. 
There are roughly $n|D|^{2}$ such maps. Because $a$ and $a'$ agree before their position, a move never changes the first $i-1$ coordinates of $x$, and we do not leave the prefix class. Because of the two identities defining a Maltsev operation, the moves reach every tuple of the class from $u$. Thus the class is the set of tuples reachable from $u$ along $n|D|^{2}$ specific applications of the Maltsev operation (that, importantly, follows immediately from the compact representation).
Second, we apply these ``moves'' to projections (with respect to the current constraint and a fixed index).
Hence, the reachability search can be done with breadth-first search on at most $|D|^{2}|S|$ vertices with $n|D|^{2}$ moves each. 
Whenever the search discovers a new projected tuple, we, as output, also store a full tuple of the class projecting to it, since this later is necessary to build the new representation.
The same search, started from any tuple of $R$ and using all the moves, then yields the projection of all of $R$. 
 
One coordinate step therefore costs $n$ positions times a search that spends $O(n|D|^{2})$ per object, i.e.\ $O(n^{2}|D|^{5}|S|)$.
Summing over the constraints gives
$O\bigl(|D|^{5}\,n^{2}\,\size{\Inst}\bigr)$ where $\size{\Inst}$ is the size of the input. For a fixed, finite constraint language we obtain $O(mn^{2})$ which matches Gaussian elimination.

This work naturally opens up corresponding questions for \emph{$k$-edge terms} (of which Maltsev is the special case $k = 2$) where the induced CSP is known to be solvable in $O(m(n|D||S^{*}|)^{(k+2)^{2}})$ time~\cite{IMMVW10}. Several steps in our algorithm carries over and the exponent can plausibly be lowered. However, whether the exponent can actually reach $O(k)$ is an intriguing open question.

\section{Preliminaries} \label{sec:prelim}

Throughout we assume that $D$ is a finite set of values (the \emph{domain}) with $d \geq 2$ elements.

\subsection{Tuples, relations, and projections}
For $t = (t_1, \ldots, t_n)) \in D^{n}$ and $l \in [n]$ write $t[l]$ or $t_{l}$ for the $l$th coordinate of $t$ and $t[{<}l]=(t_{1},\dots,t_{l-1})\in D^{l-1}$ for the prefix induced by the first $l - 1$ elements. If $l = 1$ then we get the empty tuple $\varepsilon$. We write $t_{>1}=(t_{2},\dots,t_{s})\in D^{s-1}$ for the tuple with the first entry removed. 

An \emph{index sequence} is a sequence $I=(i_{1},\dots,i_{s})\in[n]^{s}$, $s\ge0$, where  repetition is allowed. The \emph{projection} of a tuple $t\in D^{n}$ with respect to $I$ is $\pr_{I}(t)=(t_{i_{1}},\dots,t_{i_{s}})\in D^{s}$, and for $X\subseteq D^{n}$, $\pr_{I}(X)=\{\pr_{I}(t) \mid t\in X\}$. 
We write $\pr_{l}$ for $\pr_{(l)}$, and for $Y\subseteq D^{s}$ and $0\le j\le s$, $\pr_{\le j}(Y)=\{(y_{1},\dots,y_{j}):y\in Y\}$ (so $\pr_{\le0}(Y)=\{\varepsilon\}$ for $Y\ne\emptyset$).
For $Q=(q_{1},\dots,q_{k})$ we put $Q_{\le j}=(q_{1},\dots,q_{j})$, and for index sequences $I,J$ we write $(I,J)$ for their concatenation. In this context we make no difference between  $l\in[n]$ and $(l)$. Hence, $(l,Q)=(l,q_{1},\dots,q_{k})\in[n]^{k+1}$.

\subsection{The constraint satisfaction problem and invariants}

For finite $D$ a \emph{constraint satisfaction problem} over $D$ is given by a set of variables $V = \{x_1, \ldots, x_n\}$ and a set of \emph{constraints} $(Q, S)$ where $S \subseteq D^k$ is a $k$-ary relation (for some $k \geq 1$) and $Q \in [n]^k$ a tuple of variable indices (the \emph{scope}). The question is whether there exists at least one satisfying assignment, that is, a function $f \colon V \to D$ such that $(f(x_{i_1}, \ldots, f(x_{i_k}))) \in S$ for each constraint $((i_1, \ldots, i_k),S)$. We often represent assignments by tuples and $t \in D^n$ then satisfies the instance if $\pr_Q(t) \in S$ for each constraint $(Q,S)$.

In this paper we are interested in a subclass of CSPs obtained by fixing a certain algebraic property.
Let $R \subseteq D^n$ be an $n$-ary relation over $D$. Then $R$ is said to be \emph{closed}, or \emph{invariant}, under a $k$-ary $f \colon D^k \to D$ if 
\[f(t^1, \ldots, t^k) \in R\]
for each sequence $t^1, \ldots, t^k \in R$ (applied component-wise). The function $f$ is also called a \emph{polymorphism} of $R$. For a fixed operation $f \colon D^k \to D$ one can then consider the CSP problem induced by $f$ ($\CSP(\Inv(f))$) where we for each constraint $(Q, S)$ require that $S$ is closed under $f$. We use the standard assumption that relations are represented by explicitly listing the tuples of relations.

The operation of special interest in this paper is a ternary operation $\varphi \colon D^3 \to D$ that for all $x,y \in D$ satisfy $\varphi(x,x,y) = \varphi(y,x,x) = y$. Such an operation is said to be a \emph{Maltsev} operation. We call a relation invariant under a Maltsev operation a \emph{Maltsev closed} relation. For a fixed Maltsev operation $\varphi \colon D^3 \to D$ we then want to solve the problem $\CSP(\Inv(\varphi))$ (``Maltsev CSP'').

Let $S \subseteq D^k$ be a Maltsev closed, $k$-ary relation, in an $n$-variable Maltsev CSP instance.
Then we remark that for a constraint $(Q,S)$, $Q\in[n]^{k}$, its \emph{solution set} 
\[
  C_{Q,S}=\{t\in D^{n} \mid \pr_Q(t) \in S\}
\]
is also Maltsev closed.

\subsection{Signatures and representations}
We borrow the following standard notions from Bulatov \& Dalmau~\cite{BD06}.

\begin{definition}[signature]
For $R\subseteq D^{n}$,
\[
  \Sig(R)=\bigl\{(l,b,c)\in[n]\times D\times D:\ \exists\,t,t'\in R\text{ with }t[{<}l]=t'[{<}l],\ t[l]=b,\ t'[l]=c\bigr\}.
\]
A pair $(t,t')$ as in the definition \emph{witnesses} $(l,b,c)$ and is sometimes called a \emph{fork}.
\end{definition}
Note that $(l,b,b)\in\Sig(R)$ if and only if $b\in\pr_{l}(R)$, and that $|\Sig(R)|\le n d^{2}$.

\begin{definition}[representation]\label{def:rep}
A \emph{representation} of $R\subseteq D^{n}$ is a finite list $W$ of triples $(l,a,a')$ with $l\in[n]$, $a,a'\in R$ and $a[{<}l]=a'[{<}l]$, such that for every $(l,b,c)\in\Sig(R)$ there is $(l,a,a')\in W$ with $a[l]=b$ and $a'[l]=c$.
We call $l$ the \emph{position} of the triple and write
\[
  \Tup(W)=\{a,a':(l,a,a')\in W\},\qquad W_{\ge l}=\{(l',a,a')\in W: l'\ge l\}.
\]
A triple $(l,a,a')\in W$ \emph{witnesses} the signature $(l,a[l],a'[l])$.
\end{definition}

 A representation is \emph{compact} if no two of its triples witness the same signature, so that $|W|\le|\Sig(R)|\le n d^{2}$.
In Bulatov \& Dalmau~\cite{BD06} a (compact) representation of $R$ is a subset $R'\subseteq R$ with $\Sig(R')=\Sig(R)$ (and $|R'|\le 2|\Sig(R)|$).
The two notions are interchangeable: if $W$ is a representation in our sense then $\Tup(W)$ is one in theirs, and from $R'$ one obtains $W$ by choosing one witnessing pair in $R'$ for each signature.
We keep the pairs explicit because it slightly simplifies the presentation of the algorithm. 

\section{The Algorithm}

Throughout this section we assume that $\varphi$ is a Maltsev operation over a fixed $D$. We introduce basic notation used in the algorithm and prove some elementary properties in Section~\ref{sec:lemmas}, and present the actual algorithm in Section~\ref{sec:algorithm}. We wrap up with a correctness proof and a complexity analysis in Section~\ref{sec:correctness}.

\subsection{Basic notions and properties}\label{sec:lemmas}

For a relation $R\subseteq D^{n}$, $l\in[n]$ and $p\in D^{l-1}$ let
\[
  \Fix{R}{p}=\{t\in R \mid  t[{<}l]=p\}
\]
be the set of tuples of $R$ whose first $l-1$ coordinates are fixed to $p$.
For $l=1$, $\Fix{R}{\varepsilon}=R$.
We observe that if $R$ is invariant under  $\varphi$ then any fixation $\Fix{R}{p}$ is invariant under $\varphi$, too (this follows from the fact that $\varphi(x,x,x) = x$ for every $x \in D$, i.e., it is idempotent).

\begin{definition} \label{def:move}
For $a,a'\in D^{n}$ the \emph{move} $f_{a,a'}\colon D^{n}\to D^{n}$ is $f_{a,a'}(x)=\varphi(x,a,a')$.
If $I$ is an index sequence of length $s \leq n$ and $a,a'\in D^{n}$, the move $f^{I}_{a,a'}:=f_{\pr_{I}(a),\pr_{I}(a')}$ on $D^{s}$ is called the \emph{projected move} of $(a,a')$ along~$I$.
\end{definition}

For a set $F \subseteq X^X$ of maps on a set $X$ and $u\in X$ we write $\Cl_{F}(u)$ for  the smallest subset of $X$ that contains $u$ and is closed under every $f\in F$, i.e., if $x \in \Cl_F(u)$ and $f \in F$ then $f(x) \in \Cl_F(u)$.
For a set of triples $V \subseteq [n] \times D^n \times D^n$ we write $F(V)=\{f_{a,a'} \mid (l', a,a') \in V\}$ and, for an index sequence $I$, $F_{I}(V)=\{f_{a_{I},a'_{I}}:(l', a, a')\in V\}$ for the set of projected moves of $V$ along $I$. This allows us to use a set of triples as a set of maps whenever convenient.

The aim of the remainder of this section is now to recover parts of the compact representation machinery in Bulatov \& Dalmau~\cite{BD06}, somewhat specialized to our setting since we later (Section~\ref{sec:algorithm}) apply it differently. 

\begin{lemma}\label{lem:move}
Let $R\subseteq D^{n}$ be Maltsev closed, $l'\in[n]$, and $a,a'\in R$ with $a[{<}l']=a'[{<}l']$.
Then for every $x\in R$:
\begin{itemize}
\item $f_{a,a'}(x)\in R$,
\item $f_{a,a'}(x)_{i}=x_{i}$ for all $i<l'$,
\item if $x_{l'}=a_{l'}$ then $f_{a,a'}(x)_{l'}=a'_{l'}$.
\end{itemize}
\end{lemma}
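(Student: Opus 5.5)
The plan is to verify the three items directly from the definitions. Each one uses exactly one ingredient: closure of $R$ under $\varphi$ for the first, idempotency of $\varphi$ for the second, and the Maltsev identity $\varphi(x,x,y)=y$ for the third. Throughout, recall that $f_{a,a'}(x)=\varphi(x,a,a')$ is computed coordinatewise, so $f_{a,a'}(x)_i=\varphi(x_i,a_i,a'_i)$ for every $i\in[n]$.

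For the first item, $x,a,a'\in R$ and $R$ is Maltsev closed, i.e.\ invariant under $\varphi$ applied componentwise. Hence $\varphi(x,a,a')\in R$ immediately. For the second item, take $i<l'$. The hypothesis $a[{<}l']=a'[{<}l']$ gives $a_i=a'_i$, so
\[
f_{a,a'}(x)_i=\varphi(x_i,a_i,a_i)=x_i
\]
by the Maltsev identity $\varphi(y,x,x)=y$ with $y=x_i$, $x=a_i$. In particular, this item does not even need $x\in R$. For the third item, suppose $x_{l'}=a_{l'}$. Then
\[
f_{a,a'}(x)_{l'}=\varphi(a_{l'},a_{l'},a'_{l'})=a'_{l'}
\]
by the other identity $\varphi(x,x,y)=y$.

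There is no genuine obstacle here; all three items are one-line consequences of the definitions. The only care needed is to match the variables to the correct Maltsev identity. The second item uses $\varphi(y,x,x)=y$, because the repeated arguments are the last two. The third item uses $\varphi(x,x,y)=y$, because the repeated arguments are the first two.
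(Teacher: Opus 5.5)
Your proof is correct and essentially identical to the paper's: closure of $R$ under $\varphi$ gives the first item, $\varphi(x_i,a_i,a_i)=x_i$ gives the second, and $\varphi(a_{l'},a_{l'},a'_{l'})=a'_{l'}$ gives the third. One small slip is that your opening summary credits the second item to idempotency, but it actually needs the identity $\varphi(y,x,x)=y$, because $x_i$ may differ from $a_i$; your computation itself uses this identity correctly.
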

\begin{proof}
The first statement holds because $R$ is Maltsev closed.
For the second statement, for $i<l'$, $a_{i}=a'_{i}$, so $f_{a,a'}(x)_{i}=\varphi(x_{i},a_{i},a_{i})=x_{i}$.
Last, $f_{a,a'}(x)_{l'}=\varphi(a_{l'},a_{l'},a'_{l'})=a'_{l'}$.
\end{proof}

Next, we obtain a simple but useful characterization of forks of the kind $(l,b,c)$ for fixed $l$ and $b$.

\begin{lemma} \label{lem:transport}
Let $T\subseteq D^{n}$ be Maltsev closed.
\begin{itemize}
\item If $(l,b,c)\in\Sig(T)$ and $u\in T$ with $u[l]=b$, then there is $t'\in T$ with $t'[{<}l]=u[{<}l]$ and $t'[l]=c$.
\item For all $l\in[n]$, $b\in D$ and $u\in T$ with $u[l]=b$,
\begin{equation}\label{eq:transport}
  \{c\in D \mid (l,b,c)\in\Sig(T)\}=\pr_{l}\bigl(T_{u[<l]}\bigr).
\end{equation}
\end{itemize}
\end{lemma}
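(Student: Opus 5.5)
The plan is the standard Maltsev ``transport'' trick. For the first item, since $(l,b,c)\in\Sig(T)$ there are $t,s\in T$ with $t[{<}l]=s[{<}l]$, $t[l]=b$ and $s[l]=c$. Given $u\in T$ with $u[l]=b$, I would set $t'=\varphi(u,t,s)$, i.e.\ $t'=f_{t,s}(u)$. Lemma~\ref{lem:move}, applied with $l'=l$ and $(a,a')=(t,s)$, gives all three needed facts directly. First, $t'\in T$. Second, $t'_i=u_i$ for all $i<l$, because $t_i=s_i$ and $\varphi(x,y,y)=x$. Third, since $u_l=b=t_l$, we get $t'_l=\varphi(b,b,c)=c$. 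So $t'[{<}l]=u[{<}l]$ and $t'[l]=c$, as required.

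For the identity \eqref{eq:transport}, I read $T_{u[<l]}$ as $\Fix{T}{u[{<}l]}$ and prove both inclusions.

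The inclusion ``$\subseteq$'' is exactly the first item: every $c$ with $(l,b,c)\in\Sig(T)$ appears as $t'[l]$ for some $t'\in\Fix{T}{u[{<}l]}$.

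For ``$\supseteq$'', take $c\in\pr_l(\Fix{T}{u[{<}l]})$. Then there is $t'\in T$ with $t'[{<}l]=u[{<}l]$ and $t'[l]=c$. The pair $(u,t')$ agrees on the first $l-1$ coordinates and has values $b$ and $c$ at coordinate $l$, so it witnesses $(l,b,c)$, i.e.\ $(l,b,c)\in\Sig(T)$. This direction needs no closure property.

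The only delicate point is checking that applying $\varphi$ coordinatewise keeps the prefix equal to $u$'s prefix rather than to $t$'s prefix. This is precisely the second bullet of Lemma~\ref{lem:move}, so nothing beyond that lemma and the definition of $\Sig$ is needed. I would also remark that when $l=1$ the prefix is empty and $\Fix{T}{\varepsilon}=T$, so the identity says that $\{c\mid (1,b,c)\in\Sig(T)\}=\pr_1(T)$ whenever $b\in\pr_1(T)$, which is consistent.
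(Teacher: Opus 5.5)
Your proposal is correct and matches the paper's proof step for step. You get the first item by setting $t'=f_{t,s}(u)$ and applying Lemma~\ref{lem:move}; for the identity, ``$\subseteq$'' follows from the first item, and ``$\supseteq$'' holds because the pair $(u,t')$ directly witnesses $(l,b,c)\in\Sig(T)$.
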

\begin{proof}
First, let $s,s'\in T$ witness $(l,b,c)$ and put $t'=f_{s,s'}(u)$. Lemma~\ref{lem:move} then gives $t'\in T$, $t'[{<}l]=u[{<}l]$ and $t'[l]=c$.

Second, fix $l$, $b$ and $u$.
For the $\subseteq$ direction: if $(l,b,c)\in\Sig(T)$, then by the first claim there is $t'\in T_{u[<l]}$ with $t'[l]=c$, so $c\in\pr_{l}(T_{u[<l]})$.
For the $\supseteq$ direction: if $c\in\pr_{l}(T_{u[<l]})$, pick $t'\in T$ with $t'[{<}l]=u[{<}l]$ and $t'[l]=c$; then $(u,t')$ witnesses $(l,b,c)\in\Sig(T)$.
\end{proof}

To foreshadow the algorithm in Section~\ref{sec:algorithm}, assume that we are given a Maltsev closed $n$-ary relation $R$ with a representation $W$, an index $1 \leq l \leq n$, a prefix $p\in D^{l-1}$, and let $u\in\Fix{R}{p}$ be any tuple of $R$ with this prefix.
We will prove that $\Fix{R}{p}$ can be obtained from $u$ by repeated applications of the form $\varphi(x,a,a')$, where $x$ is a tuple obtained so far, and where $(l',a,a')$ is a triple of $W$ with $l'\ge l$.
That a closed relation is generated by a compact representation is the basic fact behind the algorithms of Bulatov and Dalmau~\cite{BD06,Dal04} and the generalisation to few subpowers~\cite{IMMVW10}. It is also used for the subpower membership problem~\cite{Mayr12,BMS19,Kom24}.
Two things are different in our setting. 
First, we do not explicitly close a set under the ternary operation $\varphi$, instead we apply to a single tuple only the unary maps $x\mapsto\varphi(x,a,a')$, where $(a,a')$ is taken from $W$.
Second, we use only the triples of $W$ at positions $l'\ge l$.
Every tuple produced thus keeps the prefix of $u$, and the closure never leaves $\Fix{R}{p}$.
This is what allows us to explore $\Fix{R}{p}$ without first computing a representation of it, which is the task of \textsc{Fix-values} in~\cite{BD06,Dal04} and its analogues~\cite{IMMVW10,Kom24}.

\begin{lemma}\label{lem:fibre}
Let $R\subseteq D^{n}$ be a Maltsev closed relation with a representation $W$. 
Let $l\in[n]$, $p\in D^{l-1}$ and $u\in \Fix{R}{p}$.
Then $\Fix{R}{p}=\Cl_{F(W_{\ge l})}(u)$.
\end{lemma}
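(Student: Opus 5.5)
We prove the two inclusions separately. For $\supseteq$, argue by induction on the number of moves applied. We have $u\in\Fix{R}{p}$. If $x\in\Fix{R}{p}$ and $(l',a,a')\in W$ with $l'\ge l$, then $a,a'\in R$ and $a[{<}l']=a'[{<}l']$. Lemma~\ref{lem:move} then gives $f_{a,a'}(x)\in R$ and $f_{a,a'}(x)_i=x_i$ for all $i<l'$, in particular for all $i<l$. Hence $f_{a,a'}(x)\in\Fix{R}{p}$, so $\Fix{R}{p}$ is closed under $F(W_{\ge l})$ and contains $u$. By minimality of the closure, $\Cl_{F(W_{\ge l})}(u)\subseteq\Fix{R}{p}$.

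For $\subseteq$, fix $t\in\Fix{R}{p}$. We show by induction on $j=l-1,l,\dots,n$ that there is $x^{(j)}\in\Cl_{F(W_{\ge l})}(u)$ with $x^{(j)}[{\le}j]=t[{\le}j]$. Then $j=n$ gives $t$ itself. The base case $j=l-1$ is $x^{(l-1)}=u$, since $u[{<}l]=p=t[{<}l]$.

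For the step, let $x=x^{(j-1)}$ agree with $t$ on the first $j-1$ coordinates, and put $b=x_j$, $c=t_j$. Since $x,t\in R$ share the prefix of length $j-1$, the pair $(x,t)$ witnesses $(j,b,c)\in\Sig(R)$. Because $W$ is a representation, there is $(j,a,a')\in W$ with $a_j=b$ and $a'_j=c$. Since $j\ge l$, this triple lies in $W_{\ge l}$. Now apply $y=f_{a,a'}(x)$. By Lemma~\ref{lem:move}, $y$ keeps the first $j-1$ coordinates of $x$ (hence of $t$), and since $x_j=a_j$ we get $y_j=a'_j=c=t_j$. Moreover $y\in\Cl_{F(W_{\ge l})}(u)$ by closure. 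Setting $x^{(j)}=y$ completes the step.

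The only delicate point is that the witnessing triple has position exactly $j\ge l$. This is what guarantees that we only use moves from $W_{\ge l}$, so the prefix $p$ is never disturbed. Everything else follows directly from Lemma~\ref{lem:move} and the definition of a representation; even the case $b=c$ is handled uniformly, since the step then at worst moves to another tuple with the same first $j$ coordinates.
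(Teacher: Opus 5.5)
Your proof is correct and follows the paper's argument essentially step for step. One inclusion is the closure of $\Fix{R}{p}$ under the moves of $W_{\ge l}$ via Lemma~\ref{lem:move}; the other builds $x^{(l-1)}=u,\dots,x^{(n)}=t$ coordinate by coordinate using the witnessing triple at position exactly $j\ge l$ (the paper also notes explicitly that $x^{(j-1)}\in R$ follows from the first inclusion, which you use implicitly).
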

\begin{proof}
For ``$\supseteq$'': it suffices to show that $\Fix{R}{p}$ is closed under every $f\in F(W_{\ge l})$, since $u\in \Fix{R}{p}$.
Let $x\in \Fix{R}{p}$ and $(l',a,a')\in W_{\ge l}$.
By Lemma~\ref{lem:move} $f_{a,a'}(x)\in R$ and $f_{a,a'}(x)_{i}=x_{i}=p_{i}$ for all $i<l\le l'$. Hence, $f_{a,a'}(x)\in \Fix{R}{p}$.

For ``$\subseteq$'': let $t\in \Fix{R}{p}$.
We construct tuples $x^{(l-1)},x^{(l)},\dots,x^{(n)}$ such that, for each $l'$,
\[
  x^{(l')}\in\Cl_{F(W_{\ge l})}(u)\quad\text{and}\quad x^{(l')}\text{ agrees with }t\text{ on the first }l'\text{ coordinates.}
\]
For $l'=l-1$ we simply take $x^{(l-1)}=u$: it trivially lies in the closure and agrees with $t$ on the first $l-1$ coordinates because both $u$ and $t$ have prefix $p$.
For the step from $l'-1$ to $l'$ (where $l\le l'\le n$), the tuples $x^{(l'-1)}$ and $t$ both lie in $R$ (the closure is contained in $\Fix{R}{p}\subseteq R$ by ``$\supseteq$'') and agree on $[{<}l']$, so the pair $(x^{(l'-1)},t)$ witnesses $(l',x^{(l'-1)}[l'],t[l'])\in\Sig(R)$.
Hence $W$ contains a triple $(l',a,a')$ with $a[l']=x^{(l'-1)}[l']$ and $a'[l']=t[l']$, and $f_{a,a'}\in F(W_{\ge l})$ because $l'\ge l$.
Put $x^{(l')}=f_{a,a'}(x^{(l'-1)})$, which lies in the closure. By Lemma~\ref{lem:move} it agrees with $t$ on $[{<}l']$ and has $x^{(l')}[l']=t[l']$, i.e.\ it agrees with $t$ on the first $l'$ coordinates.
Finally, $x^{(n)}$ agrees with $t$ everywhere, so $t=x^{(n)}\in\Cl_{F(W_{\ge l})}(u)$.
\end{proof}

In particular we obtain the following corollary (by letting $l = 1$ and $p$ be the empty tuple).

\begin{corollary}\label{cor:generate}
If $W$ is a representation of a Maltsev closed $R \neq \emptyset$ and $w \in R$, then $R = \Cl_{F(W)}(w)$.
\end{corollary}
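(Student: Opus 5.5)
The plan is to obtain Corollary~\ref{cor:generate} as the special case $l=1$ of Lemma~\ref{lem:fibre}. Set $l=1$ and let $p=\varepsilon$ be the empty prefix in $D^{0}$. Then $\Fix{R}{\varepsilon}=R$ by definition, since every tuple has the empty tuple as its prefix of length $0$. Because $w\in R$, we also have $w\in\Fix{R}{\varepsilon}$, so $w$ is a legitimate choice for the starting tuple $u$ in Lemma~\ref{lem:fibre}. The hypothesis $R\neq\emptyset$ is only there to guarantee that such a $w$ exists; it is already implicit once $w\in R$ is given.

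Next we identify $W_{\ge 1}$ with $W$. Every triple $(l',a,a')\in W$ has $l'\in[n]$, so $l'\ge 1$. Hence $W_{\ge 1}=W$ and $F(W_{\ge 1})=F(W)$.

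Lemma~\ref{lem:fibre} then gives $R=\Fix{R}{\varepsilon}=\Cl_{F(W_{\ge 1})}(w)=\Cl_{F(W)}(w)$, which is the claim. There is no real obstacle. The only point to check is the degenerate case $l=1$ in the proof of Lemma~\ref{lem:fibre}. There the base tuple is $x^{(0)}=u$, and it agrees with $t$ vacuously on the first $0$ coordinates. The induction then runs over $l'=1,\dots,n$ exactly as written, and each step uses a triple of $W$ with position $l'\ge 1$, so that triple is available in $W_{\ge 1}=W$.
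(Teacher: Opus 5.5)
Your proposal is correct and is exactly the paper's argument: the corollary is Lemma~\ref{lem:fibre} with $l=1$ and $p=\varepsilon$, using $\Fix{R}{\varepsilon}=R$ and $W_{\ge 1}=W$. Your extra check of the degenerate base case $x^{(0)}=u$ is sound, though the lemma as stated for $l\in[n]$ already covers it.
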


However, the forthcoming algorithm never applies moves to full $n$-tuples when searching. Instead we apply the projected moves $f^I_{a,a'}$ to projections onto a short index sequence $I$ (decided by one position $l$ and the scope of a constraint). The next, trivial, lemma says that this can be done without loss of generality.
Hence, the closures of Lemma~\ref{lem:fibre} can be computed on $\pr_I(R)$ instead of on $R$.

\begin{lemma} \label{lem:proj}
Let $I$ be an index sequence of length $s$ and let $\pi=\pr_{I}$.
For all $a,a'\in D^{n}$,
\[
  \pi\circ f_{a,a'}=f_{a_{I},a'_{I}}\circ\pi .
\]
Consequently, for $u\in D^{n}$ and $a^{1},a'^{1},\dots,a^{r},a'^{r}\in D^{n}$,
\begin{equation*}
  \pi\bigl(f_{a^{r},a'^{r}}\circ\dots\circ f_{a^{1},a'^{1}}(u)\bigr)=f_{a^{r}_{I},a'^{r}_{I}} \circ \dots \circ f_{a^{1}_{I},a'^{1}_{I}}(\pi(u)),
\end{equation*}
and for every list $V$ of triples, $\pi(\Cl_{F(V)}(u))=\Cl_{F_{I}(V)}(\pi(u))$.
\end{lemma}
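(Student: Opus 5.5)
The plan is to prove the three claims in order, each one building on the previous. All three are elementary, so the work is mostly careful bookkeeping of coordinates.

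\emph{The intertwining identity.} First I establish $\pi\circ f_{a,a'}=f_{a_I,a'_I}\circ\pi$ pointwise. Fix $x\in D^n$ and a position $j\in[s]$. Then the $j$th entry of $\pi(f_{a,a'}(x))$ is $f_{a,a'}(x)_{i_j}=\varphi(x_{i_j},a_{i_j},a'_{i_j})$, because $\varphi$ acts componentwise. This is exactly the $j$th entry of $\varphi(\pi(x),\pi(a),\pi(a'))=f_{a_I,a'_I}(\pi(x))$. Repetitions in $I$ cause no trouble, since each output coordinate $j$ reads only the input coordinate $i_j$.

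\emph{The composition identity.} Next I obtain the composition formula by induction on $r$. The case $r=0$ is trivial. For the inductive step I write the composition of $r$ moves as $f_{a^r,a'^r}$ applied to the composition of the first $r-1$ moves. Applying the intertwining identity to the outermost move pushes $\pi$ inward, and the induction hypothesis then finishes it.

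\emph{The closure identity.} Finally I handle $\pi(\Cl_{F(V)}(u))=\Cl_{F_I(V)}(\pi(u))$. The one point that needs care is describing closures as reachable sets: since $\Cl_F(u)$ is the smallest $F$-closed set containing $u$, it equals the set of all $g_r\circ\dots\circ g_1(u)$ with $r\ge0$ and $g_k\in F$. To see this, note that this set contains $u$, is closed under $F$, and is contained in every $F$-closed set containing $u$ (by induction on $r$).

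I then match words of moves in the two closures. Each $g_k=f_{a^k,a'^k}$ arises from some triple $(l_k,a^k,a'^k)\in V$, and the same triple yields the projected move $f_{a^k_I,a'^k_I}\in F_I(V)$. Conversely, every element of $F_I(V)$ arises from some triple in $V$, so every word over $F_I(V)$ lifts to a word over $F(V)$. With this correspondence, the composition formula gives the following for each word:
\[
\pi\bigl(g_r\circ\dots\circ g_1(u)\bigr)=f_{a^r_I,a'^r_I}\circ\dots\circ f_{a^1_I,a'^1_I}(\pi(u)).
\]
The left-hand sides, over all words, range over $\pi(\Cl_{F(V)}(u))$. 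The right-hand sides, over all words, range over $\Cl_{F_I(V)}(\pi(u))$. Hence the two sets are equal.

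The main obstacle, mild as it is, is exactly this lifting step. Distinct triples may yield the same projected move, or the same move on $D^n$, so the map from $F(V)$ to $F_I(V)$ need not be injective. I avoid this by indexing words by triples of $V$ rather than by maps.
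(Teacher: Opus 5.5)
Your proof is correct and takes essentially the paper's route: the componentwise identity $\pi(\varphi(x,a,a'))=\varphi(\pi(x),\pi(a),\pi(a'))$, then induction on $r$. You also spell out the closure identity, which the paper leaves implicit, by describing $\Cl_F(u)$ as the set of words over $F$ applied to $u$ and by indexing those words by triples of $V$.
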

\begin{proof}
For any $x \in D^n$ we have $\pi(\varphi(x,a,a'))=\varphi(\pi(x),\pi(a),\pi(a'))$.
The second equation follows by induction on $r$.
\end{proof}

The last lemma that we need is a simple statement involving repeated variables: if we have a $k$-ary constraint over $k' < k$ variables (i.e., some variables are repeated) then we can construct an equivalent, $k'$-ary relation, that is also preserved by Maltsev. We omit the proof since it is trivial.

\begin{lemma} \label{lem:repeat}
Let $Q\in[n]^{k}$, let $S\subseteq D^{k}$ be Maltsev closed, let $Q'\in[n]^{k'}$ list the distinct entries of $Q$ in order of first occurrence, and let $S'=\pr_{Q'}(C_{Q,S})$.
Then $S'$ is Maltsev closed, $C_{Q',S'}=C_{Q,S}$, and $|S'|\le|S|$.
Explicitly, $S'$ consists of the tuples $s\in S$ with $s_{i}=s_{i'}$ whenever $Q_{i}=Q_{i'}$, each with its repeated positions deleted, and it is computed from $S$ in time $O(k|S|)$.
\end{lemma}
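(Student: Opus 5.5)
The proof splits into three claims: $S'$ is Maltsev closed, $C_{Q',S'}=C_{Q,S}$, and the stated size and running-time bounds. Each follows by unfolding the definitions.

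\emph{Equality of solution sets.} Let $\sigma\colon[k]\to[k']$ send each $i$ to the index $j$ with $Q'_j=Q_i$.
For every $t\in D^n$ we then have $\pr_Q(t)_i=\pr_{Q'}(t)_{\sigma(i)}$.
Write $\sigma^*(y)=(y_{\sigma(1)},\dots,y_{\sigma(k)})$ for $y\in D^{k'}$, so that $\pr_Q(t)=\sigma^*(\pr_{Q'}(t))$.
If $t\in C_{Q,S}$, then $\pr_{Q'}(t)\in\pr_{Q'}(C_{Q,S})=S'$, so $C_{Q,S}\subseteq C_{Q',S'}$.
Conversely, suppose $\pr_{Q'}(t)\in S'$. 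Then there is $t^*\in C_{Q,S}$ with $\pr_{Q'}(t^*)=\pr_{Q'}(t)$. Hence
\[
\pr_Q(t)=\sigma^*(\pr_{Q'}(t))=\sigma^*(\pr_{Q'}(t^*))=\pr_Q(t^*)\in S,
\]
and so $t\in C_{Q,S}$.

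\emph{Closure and size.} $C_{Q,S}$ is Maltsev closed, as remarked in the preliminaries. Projections commute with $\varphi$ coordinatewise, as in the first line of the proof of Lemma~\ref{lem:proj}. Therefore $S'=\pr_{Q'}(C_{Q,S})$ is Maltsev closed: given $\pr_{Q'}(t^1),\pr_{Q'}(t^2),\pr_{Q'}(t^3)$ with $t^j\in C_{Q,S}$, their image under $\varphi$ is $\pr_{Q'}(\varphi(t^1,t^2,t^3))$, which lies in $S'$.

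For the explicit description, a tuple $s\in S$ equals $\pr_Q(t)$ for some $t\in D^n$ exactly when $s$ is constant on each fibre of $\sigma$, i.e.\ $s_i=s_{i'}$ whenever $Q_i=Q_{i'}$. The variables of $Q$ can then be assigned freely; here $n\ge1$ and the domain $D$ is nonempty, so this is possible. For such $s$ the projection $\pr_{Q'}(t)$ is $s$ with the repeated positions deleted. This map is injective on the admissible $s$ and surjective onto $S'$, which gives $|S'|\le|S|$.

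\emph{Running time.} First precompute $\sigma$ and the first-occurrence positions in $O(k)$ time. Since $Q$ has entries in $[n]$, this can be done with an array indexed by variable, or by sorting when $n$ is large. For each $s\in S$, check $s_i=s_{\mathrm{first}(\sigma(i))}$ for all $i$ in $O(k)$ time and, if the check passes, output the restricted tuple. The total is $O(k|S|)$.

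There is no real obstacle. The only point needing care is the converse inclusion $C_{Q',S'}\subseteq C_{Q,S}$, which rests on the factorization $\pr_Q=\sigma^*\circ\pr_{Q'}$.
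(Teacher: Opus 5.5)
Your proof is correct. The paper omits the proof of this lemma as trivial, and your argument is the routine verification it leaves out: the factorization $\pr_Q=\sigma^*\circ\pr_{Q'}$ for $C_{Q',S'}=C_{Q,S}$, closure because projection commutes with $\varphi$, and the fibre description for the explicit form and $|S'|\le|S|$. One small point on the running time: sorting $Q$ costs $O(k\log k)$ rather than $O(k)$, which can exceed $O(k|S|)$ when $|S|$ is tiny, so use the variable-indexed array instead, allocated once for the whole run of \Solve\ and reset only at the $k$ touched entries after each use.
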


\subsection{The algorithm}\label{sec:algorithm}

We describe the algorithm in a style similar to the simple algorithm for Maltsev constraints (referred to as \textsc{BD}) by Bulatov \& Dalmau~\cite{BD06}. See also Dyer \& Richerby~\cite{DR13} from which some of our notation is borrowed.
Throughout this section, $D$ is a finite set with $d$ elements, $\varphi$ is, as usual, a Maltsev operation on $D$, and $n\ge1$ an integer.
A Maltsev CSP instance $\Inst$ then consists of $n$ variables $x_{1},\dots,x_{n}$ and constraints $(Q_{1},S_{1}),\dots,(Q_{m},S_{m})$ with $Q_{j}\in[n]^{k_{j}}$ (repetitions allowed) where each $S_{j}\subseteq D^{k_{j}}$ is Maltsev closed and represented by a list of tuples. The solutions to this instance are the $t\in D^{n}$ with $\pr_{Q_j}(t)\in S_{j}$ for all $j$.

The algorithm uses the same idea as \textsc{BD} and maintains a compact representation $W$ of the set $R$ of assignments satisfying the constraints treated so far.
Initially, $R=D^{n}$, represented by the $nd^{2}$ triples $(l,e_{l,b},e_{l,c})$, where $e_{l,b}$ has $b$ at position $l$ and a fixed value $d_{0}$ elsewhere.
When we pick a constraint $(Q,S)$ then we thus need to produce a compact representation of $T=R\cap C_{Q,S}$ from $W$, $Q$ and $S$ alone. 
A compact representation of $T$ then needs, for every $(l,b,c)\in\Sig(T)$, two tuples of $T$ with a common prefix before $l$ and values $b,c$ at $l$. We can accomplish this as follows.

\begin{enumerate}
    \item 
    Let $1 \leq l \leq n$ and $b \in D$. We first attempt to compute a tuple $u_{l,b} \in T$ such that $u_{l,b}[l] = b$ (we describe how to do this in the last item of this list). We will use this tuple to describe all possible forks of the form $(l,b,c)$ for $c \in D$.
    \item 
    If we let $p = u_{l,b}[{<}l]$ then we now want to describe the set 
    $\{c \mid (l,b,c)\in\Sig(T)\}$, which by Lemma~\ref{lem:transport} equals $\{t'[l] \mid t'\in T,\ t'[{<}l]=p\} = \pr_{l}(\Fix{T}{p})$.
    \item 

    Since $\Fix{T}{p}=\Fix{R}{p}\cap C_{Q,S}$, it is enough to generate $\Fix{R}{p}$ and keep the tuples whose projection onto $Q$ lies in $S$.
By Lemma~\ref{lem:fibre}, $\Fix{R}{p}$ is obtained from the single tuple $u_{l,b}$ by repeatedly picking a tuple $x$ generated so far (initially only $u_{l,b}$) and a triple $(l',a,a')\in W$ with $l'\ge l$, and forming $\varphi(x,a,a')$. Since $a$ and $a'$ agree on the first $l'-1 \geq l-1$ positions, this tuple again has prefix $p$.
Only the triples of $W$ at positions $l'\ge l$ are  thus used here.
     \item 
Every tuple $\varphi(x,a,a')$ describes a possibly new element of $\pr_{\lQ}(\Fix{R}{p})$, which in turn, if its projection onto $Q$ lies in $S$, can be used to describe a fork with respect to $u_{l,b}$. However, in this closure step, we only need to remember the values at position $l$ and $Q$, and thus apply $\varphi$ to the projection of $x,a,a'$ to $l$ and $Q$. One full $n$-tuple per new projected tuple is carried along (said to be ``lifted''), since the witness needs to be an $n$-tuple and not a projected tuple.
     \item 
     To compute the tuples of the form $u_{l,b}$ we pick an arbitrary tuple $w \in \Tup(W)$ and then explore the tuples reachable from $w$ with respect to $(l', a, a') \in W$ (and any new tuples obtained in this way). When computing these Maltsev applications we, again, only need to worry about the values at index $l$ and $Q$ since we after this search need to remove the tuples inconsistent with $S$. However, since $u_{l,b} \in D^n$, we lift each projected tuple (over $l$ and $Q)$ to a full $n$-ary tuple.
\end{enumerate}

Similar to \textsc{BD} we avoid computing the above for $(Q,S)$ directly and instead do it gradually over one more additional position in the constraint scope in each iteration. In this way we can bound the number of projected tuples to $\le d^2|S|$ rather than $d^{|Q|+1}$.
However, in contrast, \textsc{BD} compute  a full compact representation of $\Fix{R}{p}$ (for all $nd$ pairs $(l,b)$) with the procedure \textsc{Fix-values}. This requires $\Theta(nd^{2})$ tuples of length $n$, hence $\Omega(n^{2})$ per pair and $\Omega(n^{3})$ per coordinate step, regardless of how it is implemented ($n^{4}$ with \textsc{Fix-values} from Kompatscher~\cite{Kom24}, but $n^{8}$ in the analysis of~\cite{BD06}, where \textsc{Nonempty} closes sets of full tuples under the ternary operation).
In contrast, we never compute all of $\Fix{R}{p}$. Step~4 computes its projection onto the $|Q|+1$ positions $(l,Q)$, i.e.\ at most $d^{2}|S|$ projected tuples, and each one is obtained by one pass over the at most $nd^{2}$ triples of $W$, plus one $O(n)$ to  ``lift'' a projected tuple to an $n$-ary tuple.
This is $O(nd^{4}|S|)$ per pair, $O(n^{2}d^{5}|S|)$ per coordinate step, and $O(k\,n^{2}d^{5}|S|)$ for a constraint of arity $k$. The two factors $n$ are thus the $n$ possible positions $l$, and the $\Theta(nd^{2})$ size of the compact representation $W$.

With this description in mind we now turn to describing the different procedures in detail.

\paragraph{\Reach.}
Given an index sequence $I$ of length $s$, a list $V$ of triples and $u\in D^{n}$, \Reach\ performs a breadth-first search from $\pr_{I}(u)$ in the directed graph on $D^{s}$ whose edges are the projected moves $f^{I}_{a,a'}$, $(l',a,a')\in V$, and \emph{lifts} every vertex $g$ it discovers to a full tuple $x_{g}\in D^{n}$ with $\pr_{I}(x_{g})=g$: whenever the projected move of $(a,a')$ leads from $g$ to a new vertex $g'$, the move $f_{a,a'}$ itself is applied to $x_{g}$ to obtain $x_{g'}$.
It plays the role of \textsc{Nonempty} in \cite{BD06}, but runs on projections.

\begin{procedure}{Reach}{I,V,u}
\textit{Input:} an index sequence $I$ of length $s$, a list $V$ of triples $(l',a,a')$ with $a,a'\in D^{n}$, a tuple $u\in D^{n}$.\\
\textit{Output:} a set $G\subseteq D^{s}$ and, for each $g\in G$, a tuple $x_{g}\in D^{n}$ with $\pr_{I}(x_{g})=g$, obtained from $u$ by applying moves $f_{a,a'}$ with $(l',a,a')\in V$ (Lemma~\ref{lem:reach}).
\begin{steps}
\item $G\gets\{\pr_{I}(u)\}$;\ $x_{\pr_{I}(u)}\gets u$;\ $\mathit{Queue}\gets(\pr_{I}(u))$.
\item \textbf{while} $\mathit{Queue}\neq\emptyset$ \textbf{do}
  \begin{steps}
  \item remove the first element $g$ of $\mathit{Queue}$;
  \item \textbf{for each} $(l',a,a')\in V$ \textbf{do}: $g'\gets\varphi(g,\pr_{I}(a),\pr_{I}(a'))$; \textbf{if} $g'\notin G$ \textbf{then} $x_{g'}\gets\varphi(x_{g},a,a')$, add $g'$ to $G$ and to the end of $\mathit{Queue}$.
  \end{steps}
\item \textbf{return} $(G,(x_{g})_{g\in G})$.
\end{steps}
\end{procedure}

\paragraph{\NextOne.}
Given a representation $W$ of a closed $R$ and a constraint $(Q,S)$, \NextOne\ computes a compact representation of $T=R\cap C_{Q,S}$, or reports that $T=\emptyset$.
It corresponds to \textsc{Next-$\beta$} in \textsc{BD} but the calls to \textsc{Fix-values} are replaced by calls to \Reach\ with the list $W_{\ge l}$ (Lemma~\ref{lem:fibre}).

\begin{procedure}{Next-one}{W,Q,S}
\textit{Input:} a representation $W$ of a Maltsev closed relation $R\subseteq D^{n}$, $Q\in[n]^{k}$, a Maltsev closed relation $S\subseteq D^{k}$.\\
\textit{Output:} a compact representation of $R\cap C_{Q,S}$, or ``empty''.
\begin{steps}
\item \textbf{if} $W=\emptyset$ \textbf{then return} ``empty''; otherwise pick any $w\in\Tup(W)$.
\item \textbf{for each} $l\in[n]$ \textbf{do}
  \begin{steps}
  \item $(G,x)\gets\Reach(\lQ,W,w)$;
  \item \textbf{for each} $b\in D$ \textbf{do}: \textbf{if} there is $g\in G$ with $g_{1}=b$ and $g_{>1}\in S$ \textbf{then} $u_{l,b}\gets x_{g}$ for one such $g$, \textbf{else} $u_{l,b}\gets\bot$.
  \end{steps}
\item \textbf{if} $u_{1,b}=\bot$ for all $b\in D$ \textbf{then return} ``empty''.
\item $W'\gets\emptyset$; \textbf{for each} $(l,b)\in[n]\times D$ with $u_{l,b}\neq\bot$ \textbf{do}
  \begin{steps}
  \item $(G,x)\gets\Reach(\lQ,W_{\ge l},u_{l,b})$;
  \item \textbf{for each} $c\in D$ \textbf{do}: \textbf{if} there is $g\in G$ with $g_{1}=c$ and $g_{>1}\in S$ \textbf{then} choose one such $g$ and add $(l,u_{l,b},x_{g})$ to $W'$.
  \end{steps}
\item \textbf{return} $W'$.
\end{steps}
\end{procedure}

\paragraph{\Next.}
A constraint is processed one scope coordinate at a time (as in \textsc{BD}).
This is what keeps every set $G$ computed by \Reach\ small (Lemma~\ref{lem:small}).

\begin{procedure}{Next}{W,Q,S}
\textit{Input:} a representation $W$ of a Maltsev closed relation $R\subseteq D^{n}$, a constraint $(Q,S)$, $Q\in[n]^{k}$, where $S\subseteq D^{k}$ is Maltsev closed.\\
\textit{Output:} a compact representation of $R\cap C_{Q,S}$, or ``empty''.
\begin{steps}
\item Compute $Q'=(q'_{1},\dots,q'_{k'})$ and $S'$ from $(Q,S)$ as in Lemma~\ref{lem:repeat}; \textbf{if} $S'=\emptyset$ \textbf{then return} ``empty''; \textbf{if} $k'=0$ \textbf{then return} $W$.
\item \textbf{for} $j=1,\dots,k'$ \textbf{do}
  \begin{steps}
  \item $W\gets\NextOne\bigl(W,\ Q'_{\le j},\ \pr_{\le j}(S')\bigr)$;
  \item \textbf{if} the result is ``empty'' \textbf{then return} ``empty''.
  \end{steps}
\item \textbf{return} $W$.
\end{steps}
\end{procedure}

\paragraph{\Solve.}
Fix $d_{0}\in D$ and let $e_{l,b}\in D^{n}$ be the tuple with $b$ at position $l$ and $d_{0}$ elsewhere.

\begin{procedure}{Solve}{\Inst}
\textit{Input:} an instance $\Inst$ with variables $x_{1},\dots,x_{n}$ and constraints $(Q_{1},S_{1}),\dots,(Q_{m},S_{m})$.\\
\textit{Output:} ``unsatisfiable'', or a solution and a compact representation of the set of solutions.
\begin{steps}
\item $W\gets$ the list of all $(l,e_{l,b},e_{l,c})$, $l\in[n]$, $b,c\in D$.
\item \textbf{for} $j=1,\dots,m$ \textbf{do}: $W\gets\Next(W,Q_{j},S_{j})$; \textbf{if} the result is ``empty'' \textbf{then return} ``unsatisfiable''.
\item \textbf{return} any tuple of $\Tup(W)$, and $W$.
\end{steps}
\end{procedure}

\section{Correctness and Complexity}\label{sec:correctness}

The correctness of the algorithm in Section~\ref{sec:algorithm} follows by similar arguments to Bulatov \& Dalmau~\cite{BD06} and the basic lemmas established in Section~\ref{sec:lemmas}. However, for completeness sake we analyze each procedure in detail, starting with \Reach.

\begin{lemma}\label{lem:reach}
Let $(G,x)$ be the output of $\Reach(I,V,u)$.
Then $G=\Cl_{F_{I}(V)}(\pr_{I}(u))$, and $x_{g}\in\Cl_{F(V)}(u)$ and $\pr_{I}(x_{g})=g$ for every $g\in G$.
\end{lemma}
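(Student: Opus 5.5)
We prove the lemma via a loop invariant for the breadth-first search in \Reach, and then a standard closure argument for the two inclusions. The invariant, maintained at every point of the execution, is: every $g\in G$ satisfies $g\in\Cl_{F_{I}(V)}(\pr_{I}(u))$, and the stored lift satisfies $x_{g}\in\Cl_{F(V)}(u)$ with $\pr_{I}(x_{g})=g$.

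Initially $G=\{\pr_{I}(u)\}$ and $x_{\pr_{I}(u)}=u$, so the invariant holds trivially. When a new vertex $g'=\varphi(g,\pr_{I}(a),\pr_{I}(a'))=f^{I}_{a,a'}(g)$ is added, we have $g'\in\Cl_{F_{I}(V)}(\pr_{I}(u))$, since $g$ is in this closure and the closure is closed under $f^{I}_{a,a'}\in F_{I}(V)$. The new lift $x_{g'}=f_{a,a'}(x_{g})$ lies in $\Cl_{F(V)}(u)$ for the same reason. By Lemma~\ref{lem:proj},
\[
\pr_{I}(x_{g'})=f_{\pr_I(a),\pr_I(a')}(\pr_{I}(x_{g}))=f^{I}_{a,a'}(g)=g'.
\]
Lifts are never overwritten, because a vertex is assigned only when it is first discovered. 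Hence the invariant holds throughout, and in particular at termination. This gives $G\subseteq\Cl_{F_{I}(V)}(\pr_{I}(u))$ together with the two claims about the tuples $x_g$.

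The search terminates because $G\subseteq D^{s}$ is finite, every vertex is enqueued at most once (only when it is newly added to $G$), and each dequeue does $|V|$ work.

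For the reverse inclusion, we show that the final $G$ contains $\pr_{I}(u)$ and is closed under every map in $F_{I}(V)$; minimality of the closure then gives $\Cl_{F_{I}(V)}(\pr_{I}(u))\subseteq G$. Every element added to $G$ is also appended to the queue, and the loop ends only when the queue is empty, so every $g\in G$ is eventually dequeued. At that moment the algorithm evaluates $f^{I}_{a,a'}(g)$ for every $(l',a,a')\in V$ and inserts the result into $G$ if it is absent. Since $G$ only grows, $f^{I}_{a,a'}(g)\in G$ at the end, so $G$ is closed as required.

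The only step needing care is checking that the lift $x_{g'}$ is computed from the same triple that produced $g'$, which is exactly the commuting property of Lemma~\ref{lem:proj}. The remainder is bookkeeping for breadth-first search.
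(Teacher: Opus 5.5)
Your proof is correct and follows essentially the paper's argument: the claims about the lifts are proved by induction on the order in which vertices are added, using the commutation $\pr_I\circ f_{a,a'}=f^{I}_{a,a'}\circ\pr_I$ from Lemma~\ref{lem:proj}. The only difference is that you verify both inclusions of $G=\Cl_{F_I(V)}(\pr_I(u))$ explicitly, via the invariant and the closedness of the final $G$ under every projected move. The paper simply asserts this as the standard fact that breadth-first search returns the reachable set.
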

\begin{proof}
The search starts with $\pr_{I}(u)$ and then explores the graph whose edges from $g$ lead to the vertices of the form $f^{I}_{a,a'}(g)$, for $(l',a,a')\in V$. Thus, $G$ is the set of vertices reachable from $\pr_{I}(u)$, i.e.\ $\Cl_{F_{I}(V)}(\pr_{I}(u))$.
For the second claim we induct on the order in which vertices are added.
Initially, $x_{\pr_{I}(u)}=u\in\Cl_{F(V)}(u)$ and $\pr_{I}(x_{\pr_{I}(u)})=\pr_{I}(u)$.
When $g'=f^{I}_{a,a'}(g)$ is added, we set $x_{g'}=f_{a,a'}(x_{g})$. This is included in $\Cl_{F(V)}(u)$ since $x_{g}$ is, and $f_{a,a'}\in F(V)$. By Lemma~\ref{lem:proj}, $\pr_{I}(x_{g'})=f^{I}_{a,a'}(\pr_{I}(x_{g}))=f^{I}_{a,a'}(g)=g'$.
\end{proof}

We continue with \NextOne.

\begin{theorem} \label{thm:nextone}
Let $R\subseteq D^{n}$ be Maltsev closed, let $W$ be a representation of $R$, let $Q\in[n]^{k}$, let $S\subseteq D^{k}$ be Maltsev closed, and let $T=R\cap C_{Q,S}$.
Then:
\begin{enumerate}
\item $\NextOne(W,Q,S)$ returns ``empty'' if and only if $T=\emptyset$, and
\item otherwise it returns a compact representation $W'$ of $T$.
\end{enumerate}
\end{theorem}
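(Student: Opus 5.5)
The plan is to chain Lemma~\ref{lem:reach} with Lemmas~\ref{lem:proj}, \ref{lem:fibre} and~\ref{lem:transport}, reading off exactly which projected tuples \Reach\ finds in each of the two phases of \NextOne. First, if $W=\emptyset$ then $R$ has empty signature, hence $R=\emptyset$ and $T=\emptyset$, so step~1 is correct. Otherwise pick $w\in\Tup(W)\subseteq R$.

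\emph{Phase one (steps 2 and 3).} By Lemma~\ref{lem:reach}, Corollary~\ref{cor:generate} and Lemma~\ref{lem:proj}, the set $G$ returned by $\Reach(\lQ,W,w)$ equals $\Cl_{F_{\lQ}(W)}(\pr_{\lQ}(w))=\pr_{\lQ}(\Cl_{F(W)}(w))=\pr_{\lQ}(R)$. Moreover, each lifted $x_g$ lies in $R$ and satisfies $\pr_{\lQ}(x_g)=g$. Hence a $g\in G$ with $g_1=b$ and $g_{>1}\in S$ exists if and only if some $t\in T$ has $t[l]=b$. In that case $u_{l,b}=x_g\in T$ with $u_{l,b}[l]=b$, and otherwise $b\notin\pr_l(T)$. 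So $u_{l,b}\neq\bot$ exactly when $(l,b,b)\in\Sig(T)$. Taking $l=1$: every tuple of a nonempty $T$ has some first coordinate, so step~3 returns ``empty'' if and only if $T=\emptyset$. This gives part~1.

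\emph{Phase two (step 4).} Fix $(l,b)$ with $u:=u_{l,b}\neq\bot$ and let $p=u[{<}l]$. By Lemma~\ref{lem:fibre}, $\Fix{R}{p}=\Cl_{F(W_{\ge l})}(u)$. Combining this with Lemmas~\ref{lem:reach} and~\ref{lem:proj}, the output $G$ of $\Reach(\lQ,W_{\ge l},u)$ is $\pr_{\lQ}(\Fix{R}{p})$, and each $x_g\in\Fix{R}{p}$ projects to $g$.

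\begin{itemize}
\item \emph{Soundness.} For each triple $(l,u,x_g)$ added to $W'$, the tuple $x_g$ has prefix $p$, lies in $R$, and satisfies $\pr_Q(x_g)=g_{>1}\in S$, so $x_g\in T$. Also $u\in T$. The pair therefore witnesses $(l,b,c)$ in $T$ with $c=g_1$.
\item \emph{Completeness.} Let $(l,b,c)\in\Sig(T)$. Then $(l,b,b)\in\Sig(T)$, so $u_{l,b}\neq\bot$. Since $T$ is Maltsev closed, Lemma~\ref{lem:transport} applied with the tuple $u\in T$ gives some $t'\in\Fix{T}{p}\subseteq\Fix{R}{p}$ with $t'[l]=c$. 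Its projection $g=\pr_{\lQ}(t')$ lies in $G$, has $g_1=c$, and has $g_{>1}\in S$, so some triple witnessing $(l,b,c)$ is added.
\item \emph{Compactness.} For each $(l,b,c)$ at most one triple is added, since one $g$ is chosen per $c$ and the pair $(l,b)$ is fixed. Hence no two triples witness the same signature.
\end{itemize}

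Together these show that $W'$ is a compact representation of $T$, which gives part~2.

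The main subtlety is the completeness step. It uses Lemma~\ref{lem:transport} on $T$ rather than on $R$, which needs $T$ to be Maltsev closed; this holds because $T=R\cap C_{Q,S}$ is an intersection of closed relations. The other point to get right is the identification $G=\pr_{\lQ}(\Fix{R}{p})$: Lemma~\ref{lem:proj} must be applied to the closure as a whole, and Lemma~\ref{lem:fibre} requires $u\in\Fix{R}{p}$, which holds since $u\in T\subseteq R$.
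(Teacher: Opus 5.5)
Your proof is correct and follows essentially the same route as the paper. You identify the two \Reach\ outputs as $\pr_{\lQ}(R)$ and $\pr_{\lQ}(\Fix{R}{p})$ via Corollary~\ref{cor:generate}, Lemmas~\ref{lem:fibre}, \ref{lem:proj} and~\ref{lem:reach}, and then apply Lemma~\ref{lem:transport} to the closed set $T$; the only cosmetic difference is that you split the Step~4.2 test into soundness plus completeness via the first bullet of Lemma~\ref{lem:transport}, whereas the paper reads off the equivalence in one go from~\eqref{eq:transport}.
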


\begin{proof}
First, $T$ is clearly Maltsev closed. We first deal with the simple case, then describe what the two kinds of calls of \Reach\ compute, and, finally, verify the two claims of the theorem.

If  $R\ne\emptyset$ then $\Sig(R)\ne\emptyset$, so every representation of $R$ is non-empty.
Hence if $W=\emptyset$ then $R=\emptyset$, so $T=\emptyset$ and ``empty'' is the correct answer.
Assume from now on $W\ne\emptyset$, so that a tuple $w\in\Tup(W)$ can be chosen in Step~1 of \NextOne. Note that $w\in R$ since $\Tup(W)\subseteq R$.

We continue with describing the searches from $w$.
Fix $l\in[n]$.
By Corollary~\ref{cor:generate}, $R=\Cl_{F(W)}(w)$, and projecting with Lemma~\ref{lem:proj} gives
\[
  \pr_{\lQ}(R)=\Cl_{F_{\lQ}(W)}(\pr_{\lQ}(w)).
\]
So by Lemma~\ref{lem:reach} the call $\Reach(\lQ,W,w)$ in Step~2.1 returns
\begin{equation}\label{eq:searchR}
  G=\pr_{\lQ}(R),\qquad\text{with } x_{g}\in R \text{ and } \pr_{\lQ}(x_{g})=g \text{ for every } g\in G .
\end{equation}
Consequently, the test in Step~2.2 succeeds for $b$ if and only if some $t\in R$ has $t_{l}=b$ and $\pr_{Q}(t)\in S$. I.e., if and only if $b\in\pr_{l}(T)$, and when it succeeds, the chosen $u_{l,b}=x_{g}$ lies in $R$, has $\pr_{Q}(u_{l,b})=g_{>1}\in S$, hence lies in $T$, and has $u_{l,b}[l]=g_{1}=b$.

Next, we describe the searches from $u_{l,b}$.
Fix some $(l,b)$ with $u=u_{l,b}\ne\bot$ and let $p=u[{<}l]$.
By the argument in the preceding paragraph, $u\in T\subseteq R$, so $u\in\Fix{R}{p}$, and Lemma~\ref{lem:fibre} gives $\Fix{R}{p}=\Cl_{F(W_{\ge l})}(u)$, so by projecting with Lemma~\ref{lem:proj} and applying Lemma~\ref{lem:reach}, the call $\Reach(\lQ,W_{\ge l},u)$ in Step~4.1 returns
\begin{equation}\label{eq:searchFix}
  G=\pr_{\lQ}(\Fix{R}{p}),\qquad\text{with } x_{g}\in\Fix{R}{p} \text{ and } \pr_{\lQ}(x_{g})=g \text{ for every } g\in G .
\end{equation}
Hence, for $c\in D$, the test in Step~4.2 succeeds if and only if some $t\in\Fix{R}{p}$ has $t_{l}=c$ and $\pr_{Q}(t)\in S$, i.e.\ if and only if $c\in\pr_{l}(\Fix{R}{p}\cap C_{Q,S})=\pr_{l}(\Fix{T}{p})$ (the constraint $C_{Q,S}$ does not involve the prefix, so $\Fix{T}{p}=\Fix{R}{p}\cap C_{Q,S}$).
By Lemma~\ref{lem:transport}, applied to the closed set $T$ and to $u\in T$ with $u[l]=b$, this is the case if and only if $(l,b,c)\in\Sig(T)$.
When the test succeeds, the triple $(l,u,x_{g})$ added to $W'$ satisfies the following.
Since $x_{g}\in\Fix{R}{p}$, we have $x_{g}\in R$ and $x_{g}[{<}l]=p=u[{<}l]$.
Moreover, $x_{g}[l]=g_{1}=c$ and $\pr_{Q}(x_{g})=g_{>1}\in S$, so $x_{g}\in T$.
Together with $u\in T$, $u[l]=b$: every triple in $W'$ consists of two tuples of $T$ with a common prefix, witnessing $(l,b,c)\in\Sig(T)$, and for each $(l,b)$ at most one triple is added per $c$.

We now wrap up as follows.
For the first claim, $T=\emptyset$ if and only if $\pr_{1}(T)=\emptyset$, which holds if and only if $u_{1,b}=\bot$ for all $b$, which is exactly when Step~3 returns ``empty''.
For the second claim, suppose $T\ne\emptyset$ and let $(l,b,c)\in\Sig(T)$.
Then $b\in\pr_{l}(T)$, so $u_{l,b}\ne\bot$ and by the previous paragraph Step~4.2 adds a triple $(l,u_{l,b},x_{g})$ with $u_{l,b}[l]=b$ and $x_{g}[l]=c$.
So every element of $\Sig(T)$ is witnessed in $W'$, every triple of $W'$ witnesses an element of $\Sig(T)$, and no element is witnessed twice. Hence, $W'$ is a compact representation of $T$.
\end{proof}

We work our way through the procedures and prove correctness of \Next.

\begin{theorem} \label{thm:next}
Let $R\subseteq D^{n}$ be Maltsev closed, $W$ a representation of $R$, and $(Q,S)$ a constraint where $S$ is Maltsev closed.
Then $\Next(W,Q,S)$ returns ``empty'' if and only if $R\cap C_{Q,S}=\emptyset$, and otherwise returns a compact representation of $R\cap C_{Q,S}$.
\end{theorem}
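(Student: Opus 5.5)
The proof is an induction over the loop of \Next, using Theorem~\ref{thm:nextone} at each iteration. First handle Step~1. By Lemma~\ref{lem:repeat}, $C_{Q',S'}=C_{Q,S}$ and $S'$ is Maltsev closed, so it suffices to prove the claim for $(Q',S')$.

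If $S'=\emptyset$, then $C_{Q',S'}=\emptyset$, so $R\cap C_{Q,S}=\emptyset$ and ``empty'' is correct. If $k'=0$, then $Q$ itself is empty. The constraint is then vacuous (since $S'\ne\emptyset$), so $R\cap C_{Q,S}=R$, and $W$ is returned. There is a caveat: $W$ is only assumed to be a representation, not a compact one. I would either observe that compactness can be restored by deleting duplicate witnesses in linear time, or note that in the algorithm every $W$ passed to \Next\ is already compact. Alternatively, the $k'=0$ case never arises for $k\ge1$, since $Q$ then has at least one distinct entry.

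For the loop, let $T_j=R\cap C_{Q'_{\le j},\pr_{\le j}(S')}$ with $T_0=R$. The key identity is
\[
T_j=T_{j-1}\cap C_{Q'_{\le j},\pr_{\le j}(S')}.
\]
It holds because $\pr_{\le j}(t)\in\pr_{\le j}(S')$ implies $\pr_{\le j-1}(t)\in\pr_{\le j-1}(S')$, so $C_{Q'_{\le j},\pr_{\le j}(S')}\subseteq C_{Q'_{\le j-1},\pr_{\le j-1}(S')}$. In addition, $\pr_{\le j}(S')$ is Maltsev closed, since projections of Maltsev closed relations are Maltsev closed: apply $\varphi$ to preimages. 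Finally, $C_{Q'_{\le k'},\pr_{\le k'}(S')}=C_{Q',S'}$, so $T_{k'}=R\cap C_{Q,S}$.

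The invariant, proved by induction on $j$, is this: if iteration $j$ completes without returning, then $W$ is a compact representation of $T_j$, and $T_j\ne\emptyset$. Each $T_{j-1}$ is Maltsev closed as an intersection of closed relations, so Theorem~\ref{thm:nextone} applies with $R:=T_{j-1}$ and constraint $(Q'_{\le j},\pr_{\le j}(S'))$. It returns ``empty'' exactly when $T_j=\emptyset$, and otherwise returns a compact representation of $T_j$.

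Since $T_j\supseteq T_{k'}$, an early ``empty'' at stage $j$ means $R\cap C_{Q,S}\subseteq T_j=\emptyset$, which is correct. Conversely, if $R\cap C_{Q,S}=\emptyset$ and no earlier stage returned, stage $k'$ returns ``empty''. If no stage returns ``empty'', the final $W$ is a compact representation of $T_{k'}=R\cap C_{Q,S}$.

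The only delicate points are the nesting identity and the closure of $\pr_{\le j}(S')$; both are routine.
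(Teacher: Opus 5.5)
Your proof is correct and is essentially the paper's argument: the same chain $R\cap C_{Q'_{\le j},\pr_{\le j}(S')}$, the same nesting identity, and an induction over the loop using Theorem~\ref{thm:nextone}. Your remark on the $k'=0$ exit sharpens the paper's proof slightly. The paper simply says that returning $W$ is correct, although $W$ is only assumed to be a representation and need not be compact; your observation that this case cannot arise for constraints of arity $k\ge1$ is the cleanest way to close that gap.
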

\begin{proof}
Let $Q',S'$ be as in Lemma~\ref{lem:repeat}. Then $S'$ is Maltsev closed and $C_{Q',S'}=C_{Q,S}$.
Consider first the early exits in Step~1.
If $S'=\emptyset$ then $C_{Q,S}=\emptyset$ and ``empty'' is correct.
If $k'=0$ then $Q$ is empty and $S'=\{\varepsilon\}$, so $C_{Q,S}=D^{n}$, $R\cap C_{Q,S}=R$, and returning $W$ is correct.

Thus, assume $S'\ne\emptyset$ and $k'\ge1$, and for $j=0,\dots,k'$ let
\[
  S'_{j}=\pr_{\le j}(S')\qquad\text{and}\qquad R^{(j)}=R\cap C_{Q'_{\le j},S'_{j}} .
\]
Then $R^{(0)}$ is simply $R$.
Further, $R^{(k')}=R\cap C_{Q',S'}=R\cap C_{Q,S}$, each $S'_{j}$ is Maltsev closed, and $R^{(j)}=R^{(j-1)}\cap C_{Q'_{\le j},S'_{j}}$ because $C_{Q'_{\le j},S'_{j}}\subseteq C_{Q'_{\le j-1},S'_{j-1}}$.
We claim that whenever iteration $j$ of Step~2 starts, $W$ is a representation of $R^{(j-1)}$, and that iteration $j$ returns ``empty'' if and only if $R^{(j)}=\emptyset$.
For $j=1$, $W$ represents $R^{(0)}=R$ by assumption.
If $W$ represents $R^{(j-1)}$ when iteration $j$ starts, then Theorem~\ref{thm:nextone}, applied to $R^{(j-1)}$ and the constraint $(Q'_{\le j},S'_{j})$, says that iteration $j$ returns ``empty'' if and only if $R^{(j-1)}\cap C_{Q'_{\le j},S'_{j}}=R^{(j)}$ is empty, and otherwise replaces $W$ by a compact representation of $R^{(j)}$, from which iteration $j+1$ starts.
This proves the claim.
Since $R^{(j)}\supseteq R^{(k')}=R\cap C_{Q,S}$ for every $j$, an answer ``empty'' in any iteration is correct, and if no iteration returns ``empty'' then the final $W$ is a compact representation of $R^{(k')}=R\cap C_{Q,S}$.
\end{proof}

Finally, we conclude the correctness proof with \Solve.

\begin{theorem} \label{thm:solve}
\Solve\ returns ``unsatisfiable'' if and only if the instance has no solution.
Otherwise the returned tuple is a solution and the returned $W$ is a compact representation of the set of all solutions.
\end{theorem}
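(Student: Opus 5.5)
The plan is an induction over the constraints, with Theorem~\ref{thm:next} as the inductive step. Write $R_0=D^{n}$ and $R_j=R_{j-1}\cap C_{Q_j,S_j}$ for $j=1,\dots,m$. Then $R_m$ is exactly the set of solutions of $\Inst$. Each $R_j$ is Maltsev closed, because $D^n$ and every $C_{Q_j,S_j}$ are Maltsev closed and intersections of closed relations are closed.

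First I would check the base case. The initial list $W$ in Step~1 of \Solve\ is a representation of $R_0=D^n$. Take any $(l,b,c)\in[n]\times D\times D$. The pair $(e_{l,b},e_{l,c})$ consists of tuples of $D^n$ that agree on all positions $<l$ (all equal to $d_0$). They take values $b$ and $c$ at $l$. So the triple $(l,e_{l,b},e_{l,c})$ satisfies Definition~\ref{def:rep} and witnesses $(l,b,c)$, and every signature of $D^n$ is witnessed. We do not need compactness here, since Theorem~\ref{thm:next} only requires a representation.

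Next comes the invariant: when iteration $j$ of Step~2 starts, $W$ is a representation of $R_{j-1}$, and iteration $j$ returns ``empty'' if and only if $R_j=\emptyset$. Theorem~\ref{thm:next}, applied to $R_{j-1}$, $W$ and $(Q_j,S_j)$, gives this directly. If the answer is not ``empty'', the new $W$ is a compact representation of $R_j$.

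For the conclusion there are two cases.
\begin{itemize}
\item If some iteration $j$ reports ``empty'', then $R_j=\emptyset$. Since $R_m\subseteq R_j$, the instance has no solution.
\item If no iteration reports ``empty'', then every $R_j$ is nonempty, in particular $R_m\neq\emptyset$. The final $W$ is a compact representation of $R_m$.
\end{itemize}
Conversely, if the instance is unsatisfiable, then $R_m=\emptyset$, so the last iteration (or an earlier one) must return ``empty''. This gives the ``if and only if'' for the answer ``unsatisfiable''.

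The only remaining point is that Step~3 actually outputs a solution. Because $R_m\neq\emptyset$, its signature is nonempty: any $t\in R_m$ witnesses $(1,t_1,t_1)$. Hence $W\neq\emptyset$ and $\Tup(W)\neq\emptyset$, and by Definition~\ref{def:rep} we have $\Tup(W)\subseteq R_m$. So any returned tuple is a solution.

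There is no real obstacle here. The only subtle points are the base case, where the initial list need not be compact but is a valid representation, and the edge case $m=0$. For $m=0$ the initial $W$ is nonempty and $\Tup(W)\subseteq D^n$, so the returned tuple is trivially a solution. However, the initial $W$ is compact only because distinct triples witness distinct signatures, and that should be noted explicitly.
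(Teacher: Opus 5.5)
Your proof is correct and follows the paper's argument: induction over the constraints with Theorem~\ref{thm:next} as the inductive step, and the initial list of triples $(l,e_{l,b},e_{l,c})$ as the (compact) representation of $D^{n}$ at the base. You are slightly more explicit than the paper in checking that $\Tup(W)\neq\emptyset$ and $\Tup(W)\subseteq R_m$, which is what guarantees that the returned tuple is a solution.
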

\begin{proof}
For $j\in\{0,\dots,m\}$ let $\mathrm{Sol}_{j}=\bigcap_{i\le j}C_{Q_{i},S_{i}}$ be the set of assignments satisfying the first $j$ constraints. So, $\mathrm{Sol}_{0}=D^{n}$, $\mathrm{Sol}_{j}=\mathrm{Sol}_{j-1}\cap C_{Q_{j},S_{j}}$, each $\mathrm{Sol}_{j}$ is Maltsev closed, and $\mathrm{Sol}_{m}$ is the set of solutions.
The initial list is a compact representation of $\mathrm{Sol}_{0}=D^{n}$ (the same one as in \cite{BD06}): $\Sig(D^{n})=[n]\times D\times D$, and for each $(l,b,c)$ the triple $(l,e_{l,b},e_{l,c})$ is its unique witness, as $e_{l,b}$ and $e_{l,c}$ agree before position $l$.
By induction on $j$, using Theorem~\ref{thm:next} for the step, as long as \Solve\ has not stopped, after iteration $j$ of Step~2, $W$ is a compact representation of $\mathrm{Sol}_{j}$, and iteration $j$ returns ``empty'' if and only if $\mathrm{Sol}_{j}=\emptyset$.
If some iteration $j$ returns ``empty'' then $\mathrm{Sol}_{m}\subseteq \mathrm{Sol}_{j}=\emptyset$ and the answer is correct.
Otherwise $\mathrm{Sol}_{m}\ne\emptyset$, and $W$ is a compact representation of $\mathrm{Sol}_{m}$.
%so $W\ne\emptyset$  and $\Tup(W)\subseteq \mathrm{Sol}_{m}$ consists of solutions.
\end{proof}

For the complexity, throughout, we have $|W|\le nd^{2}$ since we in each iteration produce a compact representation. 
The total cost of \Solve\ is dominated by the calls of \Reach, so we first bound the sets they explore.

\begin{lemma}\label{lem:small}
Let $R\subseteq D^{n}$ be Maltsev closed, $W$ a representation of $R$, $k\ge1$, $Q=(q_{1},\dots,q_{k})\in[n]^{k}$, $S\subseteq D^{k}$ closed, and let $S^{-}\subseteq D^{k-1}$ satisfy $\pr_{Q_{\le k-1}}(R)\subseteq S^{-}$.
Then in every call of \Reach\ made by $\NextOne(W,Q,S)$ the returned set satisfies
\[
  G\subseteq D\times S^{-}\times D,\qquad\text{so}\qquad |G|\le d^{2}\,|S^{-}|.
\]
\end{lemma}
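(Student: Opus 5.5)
The plan is to identify each set $G$ returned by a call of \Reach\ inside $\NextOne(W,Q,S)$ with a projection of a subset of $R$ onto $\lQ$, and then read off the shape from the hypothesis $\pr_{Q_{\le k-1}}(R)\subseteq S^{-}$.

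There are two kinds of calls. In Step~2.1 the call is $\Reach(\lQ,W,w)$ with $w\in\Tup(W)\subseteq R$. By Lemma~\ref{lem:reach}, $G=\Cl_{F_{\lQ}(W)}(\pr_{\lQ}(w))$. By Corollary~\ref{cor:generate} together with Lemma~\ref{lem:proj}, this equals $\pr_{\lQ}(R)$; this is exactly equation~\eqref{eq:searchR} from the proof of Theorem~\ref{thm:nextone}. In Step~4.1 the call is $\Reach(\lQ,W_{\ge l},u_{l,b})$ with $u_{l,b}\in R$. Lemma~\ref{lem:fibre} and Lemma~\ref{lem:proj} give $G=\pr_{\lQ}(\Fix{R}{p})\subseteq\pr_{\lQ}(R)$, as in~\eqref{eq:searchFix}. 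In both cases, therefore, $G\subseteq\pr_{\lQ}(R)$.

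It remains to show $\pr_{\lQ}(R)\subseteq D\times S^{-}\times D$. Take $t\in R$. Its projection is $(t_l,t_{q_1},\dots,t_{q_{k-1}},t_{q_k})$. The first entry lies in $D$ and the last lies in $D$. The middle block is $\pr_{Q_{\le k-1}}(t)\in\pr_{Q_{\le k-1}}(R)\subseteq S^{-}$. The cardinality bound $|G|\le d\cdot|S^{-}|\cdot d$ then follows immediately.

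The only delicate point is to make sure the identity $G=\pr_{\lQ}(\cdot)$ does not silently depend on the outcome of earlier steps, for example on $u_{l,b}$ being correctly chosen. To avoid this, I would argue the inclusion directly: every $g\in G$ has a lift $x_g\in\Cl_{F(V)}(u)$ with $\pr_{\lQ}(x_g)=g$ (Lemma~\ref{lem:reach}). Since $u\in R$ and $R$ is closed under each move $f_{a,a'}$ with $a,a'\in R$ (Lemma~\ref{lem:move}), we get $x_g\in R$. This uses only closure of $R$ under the moves and not the full generating property, so the inclusion $G\subseteq D\times S^{-}\times D$ holds for every call regardless of how its starting tuple was obtained.
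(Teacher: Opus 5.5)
Your proof is correct and follows the paper's argument: both get $G\subseteq\pr_{\lQ}(R)$ from \eqref{eq:searchR} and \eqref{eq:searchFix}, then observe that the middle block $\pr_{Q_{\le k-1}}(t)$ of each projected tuple lies in $S^{-}$. Your additional argument via the lifts $x_g\in\Cl_{F(V)}(u)\subseteq R$ is also valid and a bit more economical, since it needs only $u\in R$ and $\Tup(W)\subseteq R$ rather than the generating property of Lemma~\ref{lem:fibre}.
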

\begin{proof}
By~\eqref{eq:searchR} and~\eqref{eq:searchFix} in the proof of Theorem~\ref{thm:nextone}, $G=\pr_{\lQ}(R)$ in Step~2.1 and $G=\pr_{\lQ}(\Fix{R}{p})\subseteq\pr_{\lQ}(R)$ in Step~4.1.
Since $\lQ=(l,Q_{\le k-1},q_{k})$,
\[
  \pr_{\lQ}(R)=\bigl\{(t_{l},\pr_{Q_{\le k-1}}(t),t_{q_{k}}) \mid t\in R\bigr\}\subseteq D\times S^{-}\times D 
\]
which concludes the proof.
\end{proof}

\begin{lemma}\label{prop:cost}
A call $\Next(W,Q,S)$ with $Q\in[n]^{k}$ costs $O\bigl(d^{5}\,n^{2}\,k|S|\bigr)$, and \Solve\ runs in time $O(d^{5}\,n^{2}\,\size{\Inst})$.
\end{lemma}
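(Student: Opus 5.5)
The plan is to bound the cost of a single call of \Reach\ first, then the cost of one call of \NextOne, and then sum over the $k'\le k$ iterations in \Next\ and over the $m$ constraints in \Solve.

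\emph{Cost of one \Reach\ call.} Consider $\Reach(I,V,u)$ with $|I|=s$ and $|V|\le nd^{2}$. Every vertex $g\in G$ is dequeued exactly once, and it processes each triple of $V$ once. The work for one triple is as follows.
\begin{itemize}
\item Computing $\varphi(g,\pr_I(a),\pr_I(a'))$ costs $O(s)$, provided the projections $\pr_I(a)$ for $a\in\Tup(V)$ are precomputed once, at cost $O(|V|s)$.
\item Testing membership in $G$ costs $O(s)$ if $G$ is stored as a trie or a hash table (or as an array indexed by the $d^{s}$ cells, but only when that table is small). I would use a trie over $D$, which gives $O(s)$ deterministically.
\item Lifting a newly found vertex costs $O(n)$, but this happens at most $|G|$ times.
\end{itemize}
Hence one call costs $O(|G|\cdot|V|\cdot s+|G|n+|V|s)=O(|G|\,nd^{2}s)$.

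\emph{Cost of one \NextOne\ call.} Let $\NextOne(W,Q'_{\le j},\pr_{\le j}(S'))$ be called inside \Next. Apply Lemma~\ref{lem:small} with $S^{-}=\pr_{\le j-1}(S')$. Its hypothesis $\pr_{Q'_{\le j-1}}(R)\subseteq S^{-}$ holds because the current $W$ represents $R^{(j-1)}$, as established in the proof of Theorem~\ref{thm:next}. For $j=1$ we take $S^{-}=\{\varepsilon\}$. This gives $|G|\le d^{2}|S'|\le d^{2}|S|$, using Lemma~\ref{lem:repeat}.

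The test ``$g_{>1}\in S$'' should also take $O(s)$, so $\pr_{\le j}(S')$ is preloaded into a trie at cost $O(k|S|)$. Step~2 makes $n$ calls of \Reach\ and Step~4 makes at most $nd$ calls, each with $s=j+1\le k+1$. This gives a total of $O(nd\cdot d^{2}|S|\cdot nd^{2}\cdot k)=O(d^{5}n^{2}k|S|)$ per \NextOne\ call.

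\emph{The main obstacle.} Summing naively over the $k'\le k$ iterations of \Next\ gives $k^{2}$ rather than $k$. The factor $s$ per operation has to be removed. I would do this incrementally. Walking down the trie of $S'$ along $g_{>1}$ costs $O(j)$ but can be shared. Better, each projected tuple can be handled in amortised $O(1)$ per triple: store vertices by an index into the list of $\le d^{2}|S|$ possible tuples $D\times S^{-}\times D$, and precompute for each triple $(a,a')$ and each element of $S^{-}$ its image under the projected move. This precomputation costs $O(nd^{2}\cdot|S|\cdot j)$ per iteration, which sums to $O(nd^{2}k^{2}|S|)$ and so still has a $k^{2}$.

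If that fails to give linear dependence on $k$, I would fall back on the convention that $k$ is bounded, or that $k|S|$ is already the input size of the constraint. Then with $\size{\Inst}=\sum_j k_j|S_j|$ the bound $O(d^{5}n^{2}k|S|)$ per constraint is the stated one, with at most an $O(k)$ per-operation factor absorbed. This bookkeeping is where I expect the real difficulty.

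\emph{Wrap-up.} The initialisation of \Solve\ costs $O(n^{2}d^{2})$, which is dominated. Summing $O(d^{5}n^{2}k_j|S_j|)$ over $j$ gives $O(d^{5}n^{2}\size{\Inst})$.
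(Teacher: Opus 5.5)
\textbf{There is a genuine gap: as written, your proposal proves only $O(d^{5}n^{2}k^{2}|S|)$ per call of \Next, not $O(d^{5}n^{2}k|S|)$.}

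Your second idea is exactly the paper's argument. You encode a vertex $g=(g_{1},s^{i},g_{j+1})$ by the triple $(g_{1},i,g_{j+1})$ with $s^{i}\in S^{-}$, and keep $G$ as a bit array of size $d^{2}|S^{-}|$. You precompute, for every $i$ and every triple of $W$, the index of the projected image of $s^{i}$. The paper adds a bit table over pairs $(i,c)$ for the test $g_{>1}\in S'_{j}$, filled in $O(j|S'_{j}|)$. With this, every projected move and membership test is $O(1)$, one \Reach\ call costs $O(d^{2}|S|(n+nd^{2}))=O(nd^{4}|S|)$, and the $n+nd$ calls of one \NextOne\ cost $O(n^{2}d^{5}|S|)$.

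However, you conclude that the precomputation ``still has a $k^{2}$''. You then fall back on assuming $k$ bounded, or on absorbing an $O(k)$ per-operation factor into $k|S|$. Neither fallback is valid. The lemma places no bound on arity, since relations in $\CSP(\Inv(\varphi))$ may have any arity. An extra factor $k$ gives $\sum_{j}k_{j}^{2}|S_{j}|$, which is not $O(\size{\Inst})$.

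The missing observation is a single line. After Step~1 of \Next, $Q'$ lists the \emph{distinct} entries of $Q$, so $j\le k'\le n$. Hence the move table costs $O(j\,|S^{-}|\,|W|)=O(j\,nd^{2}|S|)$, which is $O(n^{2}d^{2}|S|)$ per coordinate step. That is dominated by the $O(n^{2}d^{5}|S|)$ spent inside \Reach; the same holds for the $O(j|S'_{j}|)$ bit table. Equivalently, your summed table cost $O(k'^{2}nd^{2}|S|)$ is at most $O(k\,n^{2}d^{2}|S|)$, because $k'^{2}\le k'n$. This gives $O(d^{5}n^{2}k|S|)$ per \Next. Adding $O(k|S|)$ for Step~1 and the $O(n^{2}d^{2})$ initialisation, and summing over constraints, gives $O(d^{5}n^{2}\size{\Inst})$.

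Note that $k'\le n$ alone would not rescue your first, trie-based analysis. Summing $O(d^{5}n^{2}j|S|)$ over $j\le k'$ yields only $O(d^{5}n^{3}k|S|)$. So the index encoding is genuinely needed; you had the right tool but did not finish the accounting.
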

\begin{proof}
Step~1 of \Next\ costs $O(k|S|)$ (Lemma~\ref{lem:repeat}).
Fix iteration $j$ of Step~2, i.e.\ the call $\NextOne(W,Q'_{\le j},S'_{j})$, and write $S^{-}$ for $S'_{j-1}$.
Every vertex $g$ handled by \Reach\ is a tuple in $D^{j+1}$, namely $g=\pr_{\lQ}(t)$ for some $t\in R$.
Write it as $g=(g_{1},s,g_{j+1})$ with $s=(g_{2},\dots,g_{j})\in D^{j-1}$.
By Lemma~\ref{lem:small}, $s\in S^{-}$.
Fix an enumeration $s^{1},\dots,s^{|S^{-}|}$ of $S^{-}$.
Then $g$ is determined by the three numbers $g_{1}\in D$, the index $i$ with $s=s^{i}$, and $g_{j+1}\in D$, and we store $g$ as this triple of numbers rather than as a $(j+1)$-tuple.
Consequently the set $G$ can be kept as an array of $d\cdot|S^{-}|\cdot d$ bits, so membership in $G$ is tested in constant time.
Likewise, whether $g_{>1}=(s^{i},g_{j+1})$ lies in $S'_{j}$ is read off a table with one bit per pair $(i,g_{j+1})$, which is filled once per call of \NextOne\ by running through $S'_{j}$, in $O(j|S'_{j}|)$ time.
Now consider a projected move $f^{\lQ}_{a,a'}$ applied to a vertex $(g_{1},s,g_{j+1})$.
It changes $g_{1}$ to $\varphi(g_{1},a_{l},a'_{l})$ and $g_{j+1}$ to $\varphi(g_{j+1},a_{q'_{j}},a'_{q'_{j}})$; this is two applications of $\varphi$.
It changes $s=s^{i}$ to $\varphi(s^{i},\pr_{Q'_{\le j-1}}(a),\pr_{Q'_{\le j-1}}(a'))$.
This new middle part again lies in $S^{-}$, since $a,a'\in R$, $\pr_{Q'_{\le j-1}}(R)\subseteq S^{-}$, and $S^{-}$ is closed.
It depends only on $i$ and on the triple $(l',a,a')\in W$.
We therefore compute it once for every $i$ and every triple of $W$, and store its index in $S^{-}$ in a table.
Each entry costs $O(j)$, so the table costs $O(j\,|S^{-}|\,|W|)=O(j\,n\,d^{2}|S|)$ per call of \NextOne.
With the table, every projected move is applied in $O(1)$.

One call of \Reach\ therefore inserts at most $d^{2}|S|$ vertices, spends $O(n)$ per inserted vertex on the lifted tuple $\varphi(x_{g},a,a')$, and $O(|W|)=O(nd^{2})$ per dequeued vertex on the scan of the list, which is $O(d^{2}|S|\,(n+nd^{2}))=O(nd^{4}|S|)$.
\NextOne\ makes $n$ such calls in Step~2 and at most $nd$ in Step~4, so one coordinate step costs $O(n^{2}d^{5}|S|)$, which absorbs the table.
Summing over the $k'\le k$ coordinate steps gives the bound for \Next, and summing $O(k_{j}n^{2}d^{5}|S_{j}|)$ over the constraints gives $O(d^{5}n^{2}\size{\Inst})$ for \Solve, where the $O(n^{2}d^{2})$ initial list being absorbed since $n\le\size{\Inst}$.
\end{proof}

We combine the results in this sections and conclude our main result.

\begin{theorem}\label{thm:main}
Let $\varphi \colon D^3 \to D$ be a Maltsev operation. Then an instance $\Inst$ of $\CSP(\Inv(\varphi))$ with $n$ variables is solvable in  $O(d^{5}\,n^{2}\,\size{\Inst})$ time, where $\size{\Inst}$ is the size of $\Inst$. 
\end{theorem}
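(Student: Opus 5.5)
The main theorem should follow by combining the correctness result for \Solve\ with the running-time bound, so the plan is essentially bookkeeping. Given an instance $\Inst$ of $\CSP(\Inv(\varphi))$ with variables $x_1,\dots,x_n$ and constraints $(Q_1,S_1),\dots,(Q_m,S_m)$, where each $S_j$ is Maltsev closed and listed explicitly, we run $\Solve(\Inst)$. By Theorem~\ref{thm:solve} it answers ``unsatisfiable'' exactly when there is no solution. Otherwise it returns a tuple of $\Tup(W)$ with $W$ a compact representation of the solution set; since $\Tup(W)$ is contained in that set, the tuple is a genuine solution. This settles the decision problem and also produces a witness.

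For the time bound, we invoke Lemma~\ref{prop:cost}, which bounds \Solve\ by $O(d^{5}n^{2}\size{\Inst})$. We should check that the size measure used there matches $\size{\Inst}$. Encoding constraint $j$ takes $\Theta(k_j|S_j|)$ symbols for the tuple list plus $k_j$ for the scope, so $\sum_j k_j|S_j| \le \size{\Inst}$. We also need $n\le\size{\Inst}$, so that the $O(n^2d^2)$ initialisation is absorbed. This holds under the standard convention that the variable set is part of the input. Alternatively, we can first discard variables that occur in no constraint, since they can be assigned $d_0$, and then $n\le\sum_j k_j$.

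Some care is needed when $S_j=\emptyset$ or $k_j=0$, but these are the early exits in Step~1 of \Next, which costs $O(k|S|)$, so they cost no more than the encoding.

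The only real subtlety is the unit-cost model in Lemma~\ref{prop:cost}. Evaluating $\varphi$ must take constant time, which holds because $D$ is fixed and $\varphi$ can be stored as a $d^3$ table. Array indexing into tables of size $d^2|S^-|$ must also be $O(1)$. Both are standard in the RAM model. Once these assumptions are stated, the theorem follows directly from Theorem~\ref{thm:solve} and Lemma~\ref{prop:cost}, and I expect no step to be a real obstacle.
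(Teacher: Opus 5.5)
Your proposal is correct and follows the paper's route: the main theorem is obtained by combining the correctness of \Solve\ (Theorem~\ref{thm:solve}) with the running-time bound of Lemma~\ref{prop:cost}, and your extra checks (the size measure, the early exits in \Next, unit-cost evaluation of $\varphi$ and of table lookups) agree with the accounting in that lemma. One small point: you do not actually need $n\le\size{\Inst}$, since the $O(n^{2}d^{2})$ initialisation is already within $O(d^{5}n^{2}\size{\Inst})$ as soon as $\size{\Inst}\ge 1$.
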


\bibliographystyle{abbrv}
\bibliography{bib}

\end{document}